\newcolumntype{P}[1]{>{\centering\arraybackslash}p{#1}}
\DeclarePairedDelimiter\ceil{\lceil}{\rceil}
\DeclarePairedDelimiter\floor{\lfloor}{\rfloor}
\theoremstyle{plain}
\newtheorem{theorem}{Theorem}
\newtheorem{proposition}{Proposition}
\newtheorem{lemma}{Lemma}
\newtheorem{corollary}{Corollary}
\newtheorem{construction}{Construction}
\theoremstyle{definition}
\newtheorem{definition}{Definition}
\newtheorem{example}{Example}
\newtheorem{remark}{Remark}
\newcommand{\B}{{\mathcal B}}
\newcommand{\C}{{\mathcal C}}
\newcommand{\D}{{\mathcal D}}
\DeclareMathAlphabet{\mathbfsl}{OT1}{ppl}{b}{it} 
\newcommand{\ba}{{\mathbfsl a}}
\newcommand{\bb}{{\mathbfsl b}}
\newcommand{\bu}{{\mathbfsl u}}
\newcommand{\bv}{{\mathbfsl v}}
\newcommand{\by}{{\mathbfsl y}}
\newcommand{\bc}{{\mathbfsl c}}
\newcommand{\bx}{{\mathbfsl{x}}}
\newcommand{\bz}{{\mathbfsl{z}}}
\newcommand{\bsg}{{\boldsymbol{\sigma}}}
\newcommand{\bbZ}{{\mathbb Z}}
\newcommand{\ppmod}[1]{~({\rm mod~}#1)}
\renewcommand{\ge}{\geqslant}
\renewcommand{\le}{\leqslant}
\newcommand{\et}{{\emph{et al.}}}
\newcommand{\enc}{\textsc{Enc}}
\newcommand{\dec}{\textsc{Dec}}
\newcommand{\Bindel}{{\cal B}^{\rm indel}}
\begin{document}

\pagestyle{plain}




\title{A New Version of $q$-ary Varshamov-Tenengolts Codes with more Efficient Encoders: The Differential VT Codes and The Differential Shifted VT Codes  \\[2mm]}

\author{
   \IEEEauthorblockN{
   	Tuan Thanh Nguyen, \IEEEmembership{Member, IEEE,}
	Kui Cai, \IEEEmembership{Senior Member, IEEE,}
	and Paul H. Siegel, \IEEEmembership{Life Fellow, IEEE}}
\thanks{This work was presented in part at the IEEE 2023 IEEE International Conference on Communications: SAC Cloud Computing, Networking and Storage Track (IEEE ICC)\cite{iccversion}. The work of Tuan Thanh Nguyen and Kui Cai is supported by the Singapore Ministry of Education Academic Research Funds Tier 2 MOE2019-T2-2-123 and T2EP50221-0036. The work of Paul Siegel is supported in part by NSF Grant CCF-2212437.}
\thanks{Tuan Thanh Nguyen and Kui Cai are with the Science, Mathematics, and Technology Cluster, Singapore University of Technology and Design, Singapore 487372 (email: \{tuanthanh\_nguyen, cai\_kui\}@sutd.edu.sg).}
\thanks{Paul H. Siegel is with the University of California, San Diego, La Jolla, CA 92093, USA (email: psiegel@ucsd.edu).}
}

\maketitle

\hspace{-3mm}\begin{abstract}

The problem of correcting deletions and insertions has recently received significantly increased attention due to the DNA-based data storage technology, which suffers from deletions and insertions with extremely high probability. In this work, we study the problem of constructing non-binary burst-deletion/insertion correcting codes. Particularly, for the quaternary alphabet, our designed codes are suited for correcting a burst of deletions/insertions in DNA storage.   

Non-binary codes correcting a single deletion or insertion were introduced by Tenengolts [1984], and the results were extended to correct a fixed-length burst of deletions or insertions by Schoeny \et{} [2017]. Recently, Wang \et{} [2021] proposed constructions of non-binary codes of length $n$, correcting a burst of length at most two for $q$-ary alphabets with redundancy $\log n+O(\log q \log \log n)$ bits, for arbitrary even $q$. The common idea in those constructions is to convert non-binary sequences into binary sequences, and the error decoding algorithms for the $q$-ary sequences are mainly based on the success of recovering the corresponding binary sequences, respectively. 

In this work, we look at a natural solution that the error detection and correction algorithms are performed directly over $q$-ary sequences, and for certain cases, our codes provide a more efficient encoder with lower redundancy than the best-known encoder in the literature. Particularly, 

\begin{itemize}
\item {\bf (Single-error correction codes)} We first present a new version of non-binary VT codes that are capable of correcting a single deletion or single insertion, providing an alternative simpler and more efficient encoder of the construction by Tenengolts [1984]. Our construction is based on the {\em differential vector}, and the codes are referred to as the {\em differential VT codes}. In addition, we provide linear-time algorithms that encode user messages into these codes of length $n$ over the $q$-ary alphabet for $q \ge 2$ with at most $\ceil{\log_q n}+1$ redundant symbols, while the optimal redundancy required is at least $\log_q n+\log_q (q-1)$ symbols. Our designed encoder reduces the redundancy of the best-known encoder of Tenengolts [1984] by at least  $2$ redundant symbols or equivalently $2\log_2 q$ bits. 

\item {\bf (Burst-error correction codes)} We use the idea of the {\em binary shifted VT codes} to define the {\em $q$-ary differential shifted VT codes}, and propose non-binary codes correcting a burst of up to two deletions (or two insertions) with redundancy $\log n+3\log \log n+ O(\log q)$ bits, which improves a recent result of Wang \et{} [2021] with redundancy $\log n+O(\log q \log \log n)$ bits for all $q\ge 8$. We then extend the construction to design non-binary codes correcting a burst of either exactly or at most $t$ deletions (or insertions) for arbitrary $t\ge 2$. 
\end{itemize}

\end{abstract}

\section{Introduction}

Codes correcting deletions and insertions are important for many data storage systems such as the bit-patterned media magnetic recording systems \cite{a1} and racetrack memory devices \cite{a2}.  Insertions and deletions may also occur due to the synchronization errors in communication systems \cite{a3} and mobile data \cite{a4}. Furthermore, the problem of correcting such errors has recently received significantly increased attention due to the DNA-based data storage technology, which suffers from deletions and insertions with extremely high probability \cite{O:2015, Heckel:2019, Nguyen:2021, TT:special,ryan:2022}. Designing codes for correcting deletions and/or insertions is well-known to be a challenging problem, even in the most fundamental settings with only a single error. One of the challenges that make deletions or insertions more destructive than substitutions is that only a small number of errors can cause the original data sequences and the received sequences to be vastly different under the Hamming metric. 

In this work, we focus on the design of non-binary codes that are capable of correcting a burst of deletions (or insertions), where a burst refers to a block of errors that occur in consecutive symbols. This has been pointed out as a typical type of error that arises in DNA-based data storage technology that uses nanopore sequencing technologies \cite{schoenyDNA,O:2013}. In addition, in wireless communications, burst errors also occur with high frequency due to multi-path fading \cite{ye:2008,rao:1999}. In this work, not only are we interested in constructing large error-correction codes, we desire efficient encoders and decoders that map arbitrary user data into these codes and vice versa. In general, code design takes into account the lowest redundancy required to correct such errors with fast encoding and decoding procedures. In this work, we define $\B_t(\bx)$ to be the set of sequences that can be obtained from $\bx$ via a burst of either $t$ deletions or $t$ insertions. Similarly, $\B_{\le t}(\bx)$ is the set of sequences that can be obtained from $\bx$ via a burst of at most $t$ deletions or at most $t$ insertions.  

Over the $q$-ary alphabet, $q\ge 2$, consider a channel model with a given error ball function $\B$, and suppose that the optimal redundancy required to correct such errors is ${\rm r}_{q,\B}$, then two crucial coding theory problems are: 
\vspace{1mm}

\noindent {\bf P1: Code Design.} Can one design the largest possible code $\C$, with the redundancy ${\rm r}_{\C}$, such that ${\rm r}_{\C} \to {\rm r}_{q,\B}$? 
\vspace{1mm}

\noindent {\bf P2: Encoder/Decoder Design.} Can one design an efficient encoder $\enc$ (and a corresponding decoder $\dec$) that encodes arbitrary user messages into codewords in $\C$ with nearly-optimal redundancy ${\rm r}_{\enc}$, ${\rm r}_{\enc} \to {\rm r}_{\C}$?
\vspace{1mm}

In the literature, the problems of constructing codes (problem P1) correcting a burst of exactly $t$ deletions (or exactly $t$ insertions), also known as {\em fixed-length burst}, and a burst of at most $t$ deletions (or at most $t$ insertions), also known as {\em variable-length burst} have both been studied, with the latter being the more complex problem \cite{WANG:BURST,binary-burst,VT:1965,Le:1965,kas:1998,Tene:1984,del-in,le:2del,2del:2,len:burst,cheng-burst}. On the other hand, designing efficient encoders (problem P2) is crucial for practical applications, however, in many settings, it remains an open challenge, even in the most fundamental settings with only a single error.  
\vspace{1mm}

\noindent {\bf Non-binary single-error correction codes}. The first challenge comes from extending the coding solutions in binary codes to non-binary codes. Particularly, while the problems of giving nearly-optimal explicit constructions of codes (P1) and designing nearly-optimal encoders for such codes (P2) over the binary alphabet have been settled for more than 50 years, the approach fails to be extended to the case of $q$-ary alphabet for any fixed $q>2$. 
In particular, to correct a single deletion or single insertion, we have the celebrated class of Varshamov-Tenengolts (VT) codes. In 1965, Varshamov and Tenengolts introduced the binary VT codes to correct asymmetric errors \cite{VT:1965}, and Levenshtein subsequently showed that such codes can be used for correcting a deletion or insertion with a simple linear-time decoding algorithm \cite{Le:1965}. For codewords of length $n$, the binary VT codes incur $\log (n+1)$ redundant bits \footnote{In this work, for simplicity, we use the notation ``$\log$" without the base to refer to the logarithm of base two.}
, while the optimal redundancy, provided in \cite{Le:1965}, is at least $\log n$ bits. Curiously, even though the binary VT codes and efficient decoding algorithm were known since 1965, a linear-time encoder for such codes was only proposed by Abdel-Ghaffar and Ferriera in 1998 \cite{kas:1998}, which used $\ceil{\log (n+1)}$ redundant bits. We observe that, over the binary alphabet, (P1) and (P2) are solved asymptotically optimal: 
\begin{equation*}
{\rm r}_{2,\B_1} \ge \log n, \text{ } {\rm r}_{\C}=\log (n+1), \text{and } {\rm r}_{\enc}=\ceil{\log (n+1)}.
\end{equation*}

For the non-binary alphabet, in 1984, a non-binary version of the VT codes was proposed by Tenengolts \cite{Tene:1984}, and the constructed codes can correct a single deleted or inserted symbol in the $q$-ary alphabet with a linear-time decoder for any $q>2$. The construction of Tenengolts retains the attractive properties of the binary VT codes, such as the simple decoding algorithm. For codewords of length $n$, such codes incur at most $\log_q n +1$ redundant symbols. In the same paper, Tenengolts also provided an upper bound for the cardinality of any $q$-ary codes of length $n$ correcting a deletion or insertion, which is at most $q^n/((q-1)n)$, and hence, the minimum redundancy required is at least $\log_q n+\log_q (q-1)$ symbols. Unlike the binary case, designing an efficient encoder that encodes arbitrary user messages into Tenengolts' code is a challenging task (refer to Section III-A for detailed discussion). To overcome the challenge, several attempts have been made in three variations:
\begin{itemize}
\item {\em Targeting a specific value of $q$}. When $q=4$, 
Chee \et{} \cite{chee:2019} presented a linear-time quaternary encoder that corrects a single deletion or insertion with $\ceil{\log_4 n}+1$ redundant symbols. The redundancy is asymptotically optimal. Unfortunately, the approach fails to be extended to the case of $q$-ary alphabet for arbitrary $q>2$. 
\item {\em Using more redundancy.} Abroshan \et{} \cite{M:2018} presented a systematic encoder that maps user messages into a single $q$-ary VT code as constructed in \cite{Tene:1984} with complexity that is linear in the code length. Unfortunately, the redundancy of this encoder is more than $\log_q n+\log n$ symbols (see Section II). 
\item {\em Relaxing the condition for output codewords}. In \cite{Tene:1984}, Tenengolts provided a systematic encoder that requires at least $\ceil{\log_q n}+3$ symbols, which is the best-known encoder for codes that correct a single deletion or insertion. In term of redundancy, a natural question is: can one construct a linear-time encoder with at most $r$ redundant symbols, where $\log_q n+\log_q (q-1) \le r < \ceil{\log_q n}+3$? In addition, The drawback of the encoder in \cite{Tene:1984} is that the codewords obtained from this encoder are not contained in a single $q$-ary VT code. Note that to correct a single deletion or insertion, it is not necessary that all the codewords must belong to the same coset of $q$-ary VT codes. Nevertheless, when the words share the same parameters, Abroshan \et{} \cite{M:2018} demonstrated that these codes can be adapted to correct multiple insertion/deletion errors, in the context of {\em segmented edits}  \cite{M:segment, Liu:2010, Cai:segment}.   
\end{itemize}

\noindent {\em Our contribution for single-error correction codes.} Motivated by the code design problem above, we present a new version of non-binary VT codes that give asymptotically optimal solutions for (P1) and (P2), as best as over binary alphabet, as follows:
 \begin{equation*}
{\rm r}_{q,\B_1} \ge \log_q n+\log_q(q-1), {\rm r}_{\C}=\log_q n +1, \text{and } {\rm r}_{\enc}=\ceil{\log_q n}+1.
\end{equation*}
\noindent Our construction is based on the {\em differential vector}, and the codes are referred to as the {\em differential VT codes}. Our constructed codes have the same cardinality and redundancy, as compared to the best known $q$-ary single deletion/insertion codes constructed by Tenengolts \cite{Tene:1984}. On the other hand, our proposed code construction method supports more efficient encoding and decoding procedures (in other words, it enables an easier method to solve (P2)). Consequently, our best encoder uses at most $\ceil{\log_q n}+1$ redundant symbols, and hence, it reduces the redundancy of the best known encoder of Tenengolts \cite{Tene:1984} by at least $2$ redundant symbols, or equivalently $2\log q$ redundant bits.
\vspace{1mm}

\begin{table*}[h!]
\centering 
 \begin{tabular}{ |P{2.5cm}|P{2cm}| P{4.5cm}| P{4.5cm}|}
 \hline
 &  Size of burst  &  (P1) Redundancy of the constructed code $\C$  &  (P2) Redundancy of the encoder for $\C$ \\[1ex]
 \hline
{\color{black}{Tenengolts \cite{Tene:1984}}} & {\color{black}{$=1$}} & {\color{black}{$\log_q n+1$ (symbols)}}  & {\color{black}{$\ceil{\log_q n}+3$ (symbols)}} \\
 \hline
{\bf {\color{black}{This work}}} & {\color{black}{$=1$}} & {\color{black}{$\log_q n+1$ (symbols)}}  & {\color{black}{$\ceil{\log_q n}+1$ (symbols)}}\\
 \hline
 {\color{black}{Wang \et{} \cite{2del:2}}} & {\color{black}{$\le2$}} &   {\color{black}{$\log n+O(\log q \log \log n)$ (bits)}}   &  {\color{black}{$\log q \log n + O(\log q)$ (bits)}}\\
 \hline
{\bf {\color{black}{This work}}} & {\color{black}{$\le2$}} &  {\color{black}{$\log n+3 \log \log n+O(\log q)$ (bits)}}  &  {\color{black}{$\log n+3 \log \log n+O(\log q)$ (bits)}}\\
  \hline
  {\color{black}{Schoeny \et{} \cite{schoenyDNA}}} & {\color{black}{$=t$}} &   {\color{black}{$\log n+ (t-1) \log \log n + O(t\log q)$ (bits)}}   &  {\color{black}{NA}}\\
 \hline
{\bf {\color{black}{This work}}} & {\color{black}{$=t$}} &  {\color{black}{$\log n+ (t-1) \log \log n + O(t\log q)$ (bits)}}   &  {\color{black}{$\log n+ (t-1) \log \log n + O(t\log q)$ (bits)}} \\
  \hline
  {\color{black}{Wang \et{} \cite{WANG:BURST}}} & {\color{black}{$\le t$}} &   {\color{black}{$\log n+ O(\log q \log \log n)$ (bits)}}   &  {\color{black}{NA}}\\
 \hline
{\bf {\color{black}{This work}}} & {\color{black}{$\le t$}} &  {\color{black}{$\log n+  O(t^2\log \log n) + O(t\log q)$ (bits)}}  &  {\color{black}{NA}}\\
  \hline
\end{tabular}
\caption{Related works for non-binary codes in the literature and the main contributions of this work.} 
\label{table1}
\end{table*}

\noindent {\bf Non-binary burst-error correction codes}. The earliest work on the subject, proposed by Levenshtein in 1967 \cite{le:2del}, provided an efficient construction of binary codes capable of correcting a burst at most two deletions (or two insertions) that had redundancy $\log n + 1$ for codewords of length $n$. Binary codes correcting a burst of deletions (or insertions) were later proposed in \cite{binary-burst,len:burst}. Particularly, for an arbitrary constant $t>1$, Schoeny \et{} \cite{binary-burst} proposed binary codes correcting a burst of length exactly $t$, while the work of Lenz and Polyanskii in \cite{len:burst} can correct a burst of variable length up to $t$. Note that, there is a significant difference between codes that can correct a burst of length at most
$t$ and a burst of length exactly $t$, as a code of the earlier type can correct errors of the latter, but the converse is not true in general.
Over the general $q$-ary alphabet, recently, Wang \et{} \cite{2del:2} proposed constructions of codes of length $n$, correcting a burst of length at most two with redundancy $\log n+O(\log q \log \log n)$ bits, for arbitrary even $q$. The results were later extended to construct non-binary codes correcting a burst of up to $t$ deletions (or insertions) in \cite{WANG:BURST}. However, designing efficient encoders (problem P2) for such constructed codes remains an open challenge, even in the case of $t = 2$. Particularly, to correct a burst of at most 2 errors, the authors \cite{WANG:BURST} provided a systematic construction of encoder, however, the redundancy is roughly $\log q \log n + O(\log q)$, which is much larger than the constructed codes whose redundancy was only $\log n+O(\log q \log \log n)$ bits. 
\vspace{1mm}

\noindent {\em Our contribution for burst-error correction codes.} We use the idea of the {\em binary shifted VT codes} to define the {\em $q$-ary differential shifted VT codes}, which is crucial to the construction of $q$-ary codes correcting a burst of errors. Given $t>0$, we propose non-binary codes correcting a burst of either exactly or at most $t$ deletions/insertions. Particularly, for $t=2$ and a given $q$-ary alphabet, we construct non-binary codes of length $n$ that can correct a burst of at most two deletions or two insertions with redundancy $\log n+3\log \log n+ O(\log q)$ bits, which improves a recent result of Wang \et{} [2021] with redundancy $\log n+O(\log q \log \log n)$ bits for all $q\ge 8$. In addition, we present a linear-time encoder that encodes arbitrary user messages into non-binary codes correcting a burst of at most two deletions with redundancy $\log n+3\log \log n+ O(\log q)$ bits, which improves the redundancy of the encoder in \cite{WANG:BURST}. 
\vspace{1mm}

The remainder of this paper is organized as follows. We first go through notations and some preliminary results in Section II. In Section III-A, we focus on the single error correction code, i.e. $t=1$, and present a new version of non-binary VT codes, which are referred to as the {\em differential VT codes}. In addition, in Section III-B, we present a linear-time encoder that encodes user messages into the codes, and for codewords of length $n$ over the $q$-ary alphabet, our designed encoder uses at most $\ceil{\log_q n} + 1$ redundant symbols. The efficiency of our proposed encoders, compared to previous works on single error correction codes, is illustrated in Table II. In Section IV, we introduce the {\em differential shifted VT codes} and propose non-binary codes correcting a burst of exactly $t$ errors with redundancy $\log n+(t-1)\log \log n+O(t\log q)$ bits, and design linear-time encoders for such codes. We then extend the coding method to correct at most $t$ deletions in Section V. Particularly, when $t=2$, our codes incur $\log n+3\log \log n+ O(\log q)$ bits, which improves a recent result of Wang \et{} [2021] with redundancy $\log n+O(\log q \log \log n)$ bits for all $q\ge 8$. Finally, Section VI concludes the paper. A summary of our contributions is illustrated in Table~\ref{table1}. 
\vspace{1mm}

\section{Preliminary}\label{sec:prelim}

Let $\Sigma_q$ denote an {\em alphabet} of size $q$, where $\Sigma_q=\{0,1,2,\ldots, q-1\}$.
For any positive integer $m<n$, we let $[m,n]$ denote the set $\{m,m+1,\ldots,n\}$ and $[n]=[1,n]$.

Given two sequences $\bx$ and $\by$, we let $\bx\by$ denote the {\em concatenation} of the two sequences.
In the special case where $\bx,\by \in \Sigma_q^n$, we use $\bx || \by$ to denote their {\em interleaved sequence} $x_1y_1x_2y_2\ldots x_ny_n$.
For a subset $I=\{i_1,i_2,\ldots, i_j\}$ of coordinates, we use $\bx|_I$ to denote the vector $x_{i_1}x_{i_2}\ldots x_{i_j}$. A sequence $\by$ is said to be a {\em subsequence} of $\bx$, if there exists a subset of coordinates $I$ such that $\by=\bx|_I$. We now introduce the definition of a burst of deletions or insertions.

\begin{definition}
Given $\bx=(x_1,x_2\ldots, x_n)\in \Sigma_q^n$. We say that $\bx$ suffers a burst of $t$ deletions if exactly $t$ consecutive symbols have been deleted from $\bx$, resulting a subsequence $\bx'=(x_1, x_2,\ldots, x_i, x_{i+t+1}, x_{i+t+2}, \ldots, x_n) \in \Sigma_q^{n-t}$ for some $i\in[n-t]$. On the other hand, we say that $\bx$ suffers a burst of $t$ insertions if exactly $t$ consecutive insertions have occurred from $\bx$, resulting a subsequence $\bx''=(x_1, x_2,\ldots, x_j,{\color{red}{y_1,y_2,\ldots,y_t}}, x_{i+1}, x_{i+2}, \ldots, x_n) \in \Sigma_q^{n+t}$ for some $i\in[n]$. Similarly, we say $\bx$ suffers a burst of up to $t$ deletions if $s_1$ consecutive symbols have been deleted for some $s_1\le t$, or $\bx$ suffers a burst of up to $t$ insertions if $s_2$ consecutive insertions have occurred for some $s_2\le t$.
\end{definition}

 In this work, we define $\B_t(\bx)$ to be the set of sequences that can be obtained from $\bx$ via a burst of either $t$ deletions or $t$ insertions. Similarly, $\B_{\le t}(\bx)$ is the set of sequences that can be obtained from $\bx$ via a burst of at most $t$ errors. 
 
\begin{definition}
Let $\C\subseteq \Sigma_q^n$. We say that $\C$ corrects a burst of $t$ deletions or $t$ insertions if 
and only if $\B_t(\bx)\cap \B_t(\by)=\varnothing$ for all distinct $\bx,\by \in \C$. Similarly, we say that $\C$ can correct a burst of up to $t$ deletions or up to $t$ insertions if 
and only if $\B_{\le t}(\bx)\cap \B_{\le t}(\by)=\varnothing$ for all distinct $\bx,\by \in \C$.
\end{definition}

For a code $\C \subseteq \Sigma_q^n$, the redundancy is measured by the value ${\rm r}_{\C}=n-\log_q |\C|$ (in symbols) or $n \log q-\log |\C|$ (in bits). In this work, not only are we interested in constructing large error-correction codes (problem P1), we desire an efficient encoder that maps arbitrary user data into these codes (problem P2). 

\begin{definition}
The map $\enc: \Sigma_q^k\to \Sigma_q^n$
is a {\em $t$-burst-encoder} if there exists a {\em decoder} map $\dec:\Sigma_q^{n+t}\cup \Sigma_q^n \cup \Sigma_q^{n-t} \to \Sigma_q^n$ such that the following conditions hold:
\begin{itemize} 
\item For all $\bx\in\Sigma_q^k$, we have $\dec\circ\enc(\bx)=\bx$,
\item If $\bc=\enc(\bx)$ and $\bc'\in \B_t(\bc)$, then $\dec(\bc')=\bx$.
\end{itemize}
Hence, we have that the code $\C=\{\bc : \bc=\enc(\bx),\, \bx\in\Sigma_q^k\}$ and $|\C|=q^k$.
The {\em message length} is $k$ while the {\em codeword length} is $n$. The {\em redundancy of the encoder} is measured by the value $n-k$ (in symbols) or $(n-k)\log q$ (in bits). A {\em ${{\le}t}$-burst-encoder} can be defined similarly.
\end{definition}

\begin{definition} For $q\ge 2$, the {\em VT syndrome} of a $q$-ary sequence $\bx\in\Sigma_q^n$ is defined to be
${\rm Syn}(\bx)=\sum_{i=1}^n i x_i$.
\end{definition}

To correct a single deletion or single insertion, we have the celebrated class of Varshamov-Tenengolts (VT) codes. 


\begin{construction}[Binary VT codes \cite{VT:1965}]\label{cons1} 
Given $n>0$ and $q=2$. For $a \in  \bbZ_{n+1}$, let
\begin{equation*}\label{VTcodes}
{\rm VT}_a(n)=\Big\{\bx\in \{0,1\}^n: {\rm Syn}(\bx) = a \ppmod{n+1}\Big\}.
\end{equation*}  
\end{construction}

\begin{theorem}[Levenshtein, 1965 \cite{Le:1965}]
For $a \in  \bbZ_{n+1}$, ${\rm VT}_a(n)$ can correct a single deletion or a single insertion. There exists $a\in\bbZ_{n+1}$ such that ${\rm VT}_a(n)$ has at least $2^n/(n+1)$ codewords, and the redundancy of the code is at most $\log (n+1)$ bits.
\end{theorem}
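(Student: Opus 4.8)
The plan is to prove the classical Levenshtein result about VT codes in three parts, matching the three claims in the theorem statement.

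\emph{Part 1 (single-deletion/insertion correction).} The heart of the matter is to show that for fixed $a$, distinct codewords $\bx,\by\in{\rm VT}_a(n)$ cannot share a common subsequence of length $n-1$ (equivalently, by a standard symmetry argument, insertion correction reduces to deletion correction, so it suffices to handle deletions). First I would set up the decoding procedure: given a received word $\bx'\in\{0,1\}^{n-1}$ obtained from some $\bx\in{\rm VT}_a(n)$ by a single deletion, I want to recover $\bx$ uniquely. The key quantity is the \emph{deficiency} $D = a - {\rm Syn}(\bx') \ppmod{n+1}$. The idea is that deleting a $1$ in position $i$ of $\bx$ decreases ${\rm Syn}(\bx)$ by $i$ plus the number of $1$'s to the right of the deleted bit (since every bit after the deleted position shifts left by one, reducing each of their positional contributions by exactly their value). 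Writing $R_1$ for the number of $1$'s to the right of the deleted symbol and $L_0$ for the number of $0$'s to the left, one computes that the change in syndrome caused by deleting a $1$ equals (position in $\bx'$) $+ R_1$, while deleting a $0$ changes the syndrome by exactly $R_1$ (the count of $1$'s to its right). The decoder then inspects $D$ and compares it against the number of $1$'s in $\bx'$ to decide whether a $0$ or a $1$ was deleted and in which run, then reinserts it at the uniquely determined boundary between runs. I would carry out this bookkeeping carefully, since getting the precise relation between $D$, $R_1$, and the run structure of $\bx'$ is exactly what makes the reconstruction unambiguous.

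\emph{Part 2 (existence of a large code).} This is the pigeonhole step. The residue map $\bx\mapsto{\rm Syn}(\bx)\ppmod{n+1}$ partitions the $2^n$ binary strings of length $n$ into the $n+1$ classes ${\rm VT}_0(n),\ldots,{\rm VT}_n(n)$, which are pairwise disjoint and cover $\{0,1\}^n$. By averaging, at least one class has size at least $2^n/(n+1)$; choosing $a$ to be the index of that class gives the claimed cardinality bound.

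\emph{Part 3 (redundancy bound).} From the cardinality bound $|{\rm VT}_a(n)|\ge 2^n/(n+1)$, the redundancy satisfies ${\rm r}_{\C}=n-\log|{\rm VT}_a(n)|\le n-\log\bigl(2^n/(n+1)\bigr)=\log(n+1)$ bits, which is immediate.

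The main obstacle is Part 1, and within it the precise accounting of how a single deletion perturbs the syndrome modulo $n+1$. The subtlety is that the positional weights shift when a symbol is removed, so the naive ``subtract the deleted position'' intuition is incomplete; one must track the contribution of the trailing $1$'s, and then argue that the pair (whether a $0$ or a $1$ was deleted, which run it came from) is uniquely pinned down by the single residue $D$. I would organize this by splitting on the sign or magnitude of $D$ relative to the number of $1$'s in $\bx'$, and verify that in each regime the reconstruction rule produces exactly one candidate. Since the insertion case and the deletion case are linked by the Levenshtein reduction (a code corrects a single deletion if and only if it corrects a single insertion, via the fact that the relevant error balls intersect symmetrically), I would invoke that equivalence rather than redoing the analysis for insertions.
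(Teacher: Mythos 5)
The paper states this theorem as a cited classical result of Levenshtein and supplies no proof of its own, so the only meaningful benchmark is the standard argument, which is exactly what you reproduce correctly: the syndrome deficit $D=a-{\rm Syn}(\bx')\ppmod{n+1}$ equals $R_1$ when a $0$ is deleted and equals the deleted position plus $R_1$ (that is, $L_0+1+w'$ with $w'$ the weight of $\bx'$) when a $1$ is deleted, so comparing $D$ against $w'$ determines the deleted value and then the unique run for reinsertion, and your pigeonhole and redundancy steps are immediate. Your plan also mirrors the architecture the paper uses for its own Theorem~\ref{mainresult} on the $q$-ary differential VT codes (compute $\Delta$ and the symbol sum $s$, case-split by comparing them, and exploit monotonicity of the deficit across runs), so the proposal is aligned with the paper's methodology rather than a departure from it.
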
 


Over the nonbinary alphabet, in 1984, Tenengolts \cite{Tene:1984} generalized the binary VT codes to $q$-ary VT codes for any fixed $q$-ary alphabet. Crucial to the construction of Tenengolts in \cite{Tene:1984} was the concept of the {\em signature vector} defined as follows. 

\begin{definition} The {\em signature vector} of a $q$-ary vector $\bx$ of length $n$ 
is a binary vector $\alpha(\bx)$ of length $n-1$, 
where $\alpha(x)_i=1$ if $x_{i+1}\geq x_i$, and $0$ otherwise, for $i\in[n-1]$.
\end{definition}

\begin{construction}[$q$-ary VT codes as proposed in \cite{Tene:1984}]\label{cons2} 
Given $n,q>0$, for $a\in \bbZ_n$ and $b\in \bbZ_q$, set 
{
\begin{align*}
\label{qaryVT}
    {\rm T}_{a,b}({n;q}) \triangleq \Big\{ \bx \in \bbZ_q^n : &\alpha(\bx)\in{\rm VT}_a(n-1) \text{ and } \sum_{i=1}^n x_i = b\ppmod{q} \Big\}.
\end{align*}  
}
\end{construction}

\begin{theorem}[Tenengolts, 1984 \cite{Tene:1984}]
The set ${\rm T}_{a,b}(n;q)$ forms a $q$-ary single deletion/insertion correction code and 
there exists $a$ and $b$ such that the size of ${\rm T}_{a,b}({n;q})$ is at least $q^n/(qn)$. There exists a systematic encoder $\enc_{\rm T}$
with redundancy $\ceil{\log n} + 3\ceil{\log q} $ (bits) or $\ceil{\log_q n} + 3$ (symbols).
\end{theorem}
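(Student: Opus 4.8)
The plan is to establish the theorem in three parts, matching its three assertions: the error-correction property, the cardinality bound, and the existence of an efficient encoder. The error-correction claim is the structural heart, so I would tackle it first.

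For the error-correction property, the goal is to show that from any single deletion or insertion applied to a codeword $\bx \in {\rm T}_{a,b}(n;q)$, one can uniquely recover $\bx$. The natural strategy is to reduce the $q$-ary problem to the binary VT decoding already available to us. First I would observe that a single deletion in $\bx$ induces a closely related perturbation in the signature vector $\alpha(\bx)$: deleting $x_i$ changes the comparison structure only locally, so $\alpha(\bx)$ suffers essentially a single deletion (with a possible local value change at the junction where $x_{i-1}$ and $x_{i+1}$ become adjacent). The key steps are then: (i) use the constraint $\alpha(\bx) \in {\rm VT}_a(n-1)$ together with Levenshtein's decoder (Theorem~1) to locate the deletion up to the ambiguity inherent in VT decoding, thereby identifying the approximate position in $\bx$ where a symbol was lost; and (ii) use the sum constraint $\sum_i x_i \equiv b \pmod q$ to recover the \emph{value} of the deleted symbol, since once the position is pinned down the missing symbol is forced modulo $q$, and a $q$-ary symbol lies in a complete residue system modulo $q$. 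The subtle point, which I expect to be the main obstacle, is handling the boundary case where the binary signature decoder leaves a run-based ambiguity: I must verify that the combination of the located position and the sum check resolves it, i.e.\ that no two distinct codewords can yield the same received word. I would argue this by contradiction, supposing $\bx, \by \in {\rm T}_{a,b}(n;q)$ share a common descendant, and deriving that their signatures collide in ${\rm VT}_a(n-1)$ (contradicting Theorem~1) or that the sums disagree.

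For the cardinality bound, I would use a straightforward averaging (pigeonhole) argument over the cosets. There are $n$ choices of $a$ and $q$ choices of $b$, and the sets ${\rm T}_{a,b}(n;q)$ partition $\Sigma_q^n$ as $(a,b)$ ranges over $\bbZ_n \times \bbZ_q$. Since $|\Sigma_q^n| = q^n$ and there are $nq$ parts, some part has size at least $q^n/(qn)$, giving the claimed bound and hence redundancy at most $\log_q(nq) = \log_q n + 1$ symbols.

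For the encoder, since we may invoke results stated earlier only up through this theorem, I would indicate that the systematic encoder follows Tenengolts' original construction: one reserves a small block of positions to store the syndrome parameters $a$ and $b$ and fills the remaining positions with user data, adjusting the reserved symbols so the two congruences hold. The redundancy accounting is the routine part: encoding $a \in \bbZ_n$ costs $\ceil{\log n}$ bits (equivalently $\ceil{\log_q n}$ symbols), encoding $b \in \bbZ_q$ costs $\ceil{\log q}$ bits (one symbol), and the systematic overhead to protect these parity symbols against the single edit accounts for the remaining additive constant, yielding $\ceil{\log n} + 3\ceil{\log q}$ bits. I would treat the detailed bookkeeping as mechanical, flagging only that care is needed to ensure the parity-carrying prefix is itself robust to a single edit so that decoding can first recover $(a,b)$ and then apply the decoder of the first part.
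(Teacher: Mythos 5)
This statement is Theorem~2 in the paper's Preliminaries and is quoted from Tenengolts~\cite{Tene:1984} without proof, so there is no in-paper argument to compare against; your proposal has to be judged as a reconstruction of Tenengolts' original argument, which is indeed the route you take (reduce to binary VT decoding of the signature, pigeonhole for the size, syndrome-appending for the encoder). The cardinality part is correct and complete: the $nq$ sets ${\rm T}_{a,b}(n;q)$ partition $\Sigma_q^n$, so one has size at least $q^n/(qn)$. The encoder accounting is also essentially right ($\ceil{\log_q n}$ symbols for the syndrome of the signature, one symbol for the mod-$q$ sum, two symbols of marker/separator to keep the check block locatable after an edit).

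The genuine gap is in the reduction that drives the error-correction claim. You write that deleting $x_i$ causes $\alpha(\bx)$ to suffer ``essentially a single deletion (with a possible local value change at the junction).'' If a deletion in $\bx$ really could induce a deletion \emph{plus} a substitution in the signature, the whole approach collapses, because ${\rm VT}_a(n-1)$ does not correct a deletion combined with a substitution. The linchpin you are missing is the elementary but essential lemma that the new junction bit $[x_{i+1}\ge x_{i-1}]$ always \emph{coincides} with at least one of the two bits $[x_i\ge x_{i-1}]$ and $[x_{i+1}\ge x_i]$ that it replaces (check the four sign patterns), so the signature of the corrupted word is obtained from $\alpha(\bx)$ by a \emph{pure} single deletion of one of those two bits. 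Without stating and using this, step (i) is not justified. Relatedly, your step (ii) is stated backwards in one respect: the value of the deleted symbol is recovered from the $\bmod\ q$ sum independently of its position (the sum constraint forces it regardless of where the deletion occurred), and the remaining work --- which you correctly flag as the main obstacle but leave entirely to a contradiction argument you do not carry out --- is to show that the recovered signature together with that value pins down the insertion point up to positions yielding the same word. As it stands the core uniqueness claim is asserted rather than proved.
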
 

On the other hand, the codewords obtained from the encoder $\enc_{\rm T}$ are not contained in a single $q$-ary VT code ${\rm T}_{a,b}(n;q)$. Recently, Abroshan \et{} \cite{M:2018} presented a systematic encoder that maps binary messages into ${\rm T}_{a,b}(n;q)$. Unfortunately, the redundancy of the encoder is as large as $\log n(\log q+1)+2(\log q-1)$ bits, and hence, more than $\log n+\log_q n$ symbols.

\section{Correcting a Single Deletion or Insertion: a New Version of $q$-ary VT Codes}

{\em A Natural Idea from Binary VT Codes.} Recall the design of the binary VT codes ${\rm VT}_a(n)$ from Construction 1 to correct a single deletion or insertion. A natural question is whether there exists a simple VT syndrome over $q$-ary codewords to correct single deletion or insertion for arbitrary $q>2$. Observe that, in the construction of Tenengolts \cite{Tene:1984} (refer to Construction 2), the VT syndrome is enforced over the signature of each codeword, which is a binary sequence. That is a drawback leading to the difficulty of designing an efficient encoder as in the binary case. Consequently, to encode arbitrary messages into ${\rm T}_{a,b}(n;q)$ by enforcing the VT syndrome over the binary signature sequences, Abroshan \et{} \cite{M:2018} required more than $\log_q n+\log n$ redundant symbols. A natural solution should be obtained by enforcing a single VT syndrome over all $q$-ary sequences, and consequently, the design of a corresponding encoder would be simple as in the binary case. On the other hand, we observe that imposing VT syndrome directly over every $q$-ary codeword is not sufficient to correct a deletion or insertion. For example, it is easy to verify that the following two sequences $\bz_1 = \bx213\by$ and $\bz_2 = \bx132\by$, where $\bx, \by$ are arbitrary sequences, have the same VT syndrome, however, share a common sequence in the single error ball as $\bz' = \bx13\by$. The first contribution of our work is to show that imposing the VT syndrome over the {\em differential vector} of every $q$-ary codeword allows us to correct a single error.   


\subsection{The Differential VT Codes} 

\begin{definition}
Given $\bx \in \Sigma_q^n$. The {\em differential vector} of $\bx$, denoted by ${\rm Diff}(\bx)$, is a sequence $\by={\rm Diff}(\bx) \in \Sigma_q^n$ where: 
\begin{equation*}
\left\{ \begin{array}{ll}
y_i &=x_i - x_{i+1} \ppmod{q} \mbox{, for } 1\le i\le n-1, \\ 
y_n &=x_n.
\end{array}\right.
\end{equation*}
\end{definition}

Clearly, ${\rm Diff}(\bx)$ is a one-to-one function. From $\by={\rm Diff}(\bx)$, we can obtain $\bx= {\rm Diff}^{-1}(\by)$ as follows.
\begin{equation*}
\left\{ \begin{array}{ll}
x_n &= y_n, \mbox{ and }\\ 
x_i &= \sum_{j=i}^n y_j \ppmod{q} \mbox{, for } n-1\ge i\ge 1.
\end{array}\right.
\end{equation*}

\begin{construction}[{\bf The $q$-ary  Differential VT codes}]
Given $n>0$. For $q\ge 2, a\in \bbZ_{qn}$, set
\begin{equation*} 
{\rm Diff\_VT}_{a}({n;q}) \triangleq \big\{ \bx \in \Sigma_q^n: {\rm Syn}({\rm Diff}(\bx)) = a  \ppmod{qn} \big\}.
\end{equation*}
\end{construction} 

Our main contribution in this section is summarized as follows. 

\begin{theorem}\label{mainresult}
The code ${\rm Diff\_VT}_{a}({n;q})$ can correct a single deletion or single insertion in linear time. In other words, there exists a linear-time decoder $\dec_{\rm error}: \Sigma_q^{n-1} \cup \Sigma_q^{n+1} \to \Sigma_q^n$ such that if $\bx'$ is obtained from $\bx \in {\rm Diff\_VT}_{a}({n;q})$ after a deletion or an insertion, we can recover $\bx=\dec_{\rm error}(\bx')$. In addition, there exists $a\in \bbZ_{qn}$, such that $\big|{\rm Diff\_VT}_{a}({n;q}) \big| \ge {q^n}/{(qn)}$.
\end{theorem}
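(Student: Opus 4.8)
The plan is to exploit the fact that applying ${\rm Diff}$ converts a single deletion (resp.\ insertion) in $\bx$ into a structurally simple operation on $\by := {\rm Diff}(\bx)$, which the VT syndrome is designed to detect. The size bound is immediate: since ${\rm Diff}$ is a bijection on $\Sigma_q^n$, the map $\bx \mapsto {\rm Syn}({\rm Diff}(\bx)) \bmod qn$ partitions the $q^n$ sequences of $\Sigma_q^n$ into $qn$ classes indexed by $\bbZ_{qn}$, so by the pigeonhole principle some class ${\rm Diff\_VT}_{a}(n;q)$ has at least $q^n/(qn)$ elements.

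The first key step is a structural lemma: if $\bx'$ is obtained from $\bx$ by deleting $x_j$, then $\by' := {\rm Diff}(\bx')$ is obtained from $\by$ by \emph{merging} the two consecutive entries $y_{j-1}, y_j$ into the single entry $y_{j-1}+y_j \bmod q$ when $2 \le j \le n$, while deleting $x_1$ simply deletes $y_1$. I would verify this directly from the inversion formula $x_i = \sum_{l=i}^n y_l \bmod q$, checking that the unchanged prefix, the merged symbol, and the shifted suffix appear as claimed (the endpoints $j=1$ and $j=n$ being handled as special cases). Conversely, a single insertion in $\bx$ corresponds to \emph{splitting} one entry of $\by$ into two entries that sum to it mod $q$; since insertion and deletion are inverse operations and the decoder can read off the received length ($n-1$, $n$, or $n+1$), it suffices to treat the deletion (merge) case.

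The second step computes the syndrome deficiency. Writing the merge as $y'_{j-1} = y_{j-1}+y_j - qc$ with carry $c \in \{0,1\}$, a direct computation gives ${\rm Syn}(\by) - {\rm Syn}(\by') = \sum_{i=j}^n y_i + (j-1)qc \ppmod{qn}$. The crucial observation is that this integer deficiency $D$ lies in $[0,qn)$: indeed $\sum_{i=j}^n y_i + (j-1)qc \le (n-j+1)(q-1) + (j-1)q = qn-(n-j+1) < qn$. Hence from the stored residue $a = {\rm Syn}(\by) \bmod qn$ and the received $\by'$ we recover the exact integer $D$.

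The final, and most delicate, step is to use $D$ together with the suffix sums $T'(k) := \sum_{i=k}^{n-1} y'_i$ of the received differential vector to localize the merge. Since $y'_i = y_{i+1}$ for $i \ge j$, one has $T'(j) = \sum_{i=j+1}^n y_i$, and therefore $D = y_j + T'(j) + (j-1)qc$; a Levenshtein-style scan over candidate positions should pin down a unique consistent triple $(j,c,y_j)$, after which $y_{j-1} = y'_{j-1} + qc - y_j$ and hence all of $\by$ (equivalently $\bx = {\rm Diff}^{-1}(\by)$) are reconstructed in $O(n)$ time. I expect the main obstacle to be precisely this uniqueness: because the carry $c$ contributes an unknown multiple of $q$ to $D$ and the merge couples $y_{j-1}$ with $y_j$, one must argue carefully that no two distinct merge positions together with their carries can produce the same received word $\by'$ \emph{and} the same deficiency $D$. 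Establishing this—the $q$-ary analogue of the case analysis in Levenshtein's decoding of the binary VT code—is the heart of the proof, and the symmetric split case for insertions then follows by the same bookkeeping.
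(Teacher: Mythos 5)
Your setup follows the paper's route exactly: the pigeonhole bound, the merge/split lemma for ${\rm Diff}$ (the paper's Lemma~1), the syndrome deficiency $D=\sum_{i=j}^{n}y_i+(j-1)qc$, and the observation that $D<qn$ so the residue $a\ppmod{qn}$ recovers $D$ exactly. All of that is correct, and your explicit bound $D\le qn-(n-j+1)<qn$ is a nice way of making precise why the modulus $qn$ suffices. However, you stop exactly at the step you yourself identify as ``the heart of the proof'': you never establish that the consistent triple $(j,c,y_j)$ is unique. A claim of the form ``a Levenshtein-style scan should pin down a unique consistent triple'' is not a proof; as stated, the decodability of the code is still open in your write-up, so there is a genuine gap.

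For comparison, the paper closes this gap in two moves that your proposal is missing. First, a parity-check lemma: ${\rm Syn}({\rm Diff}(\bx))\equiv\sum_{i=1}^n x_i\ppmod{q}$, so the \emph{value} $\gamma$ of the deleted symbol is recovered immediately from $a-\sum_j x'_j\ppmod{q}$, decoupling value recovery from location recovery (in your notation, $y_j$ is no longer a free unknown in the scan). Second, with $s=\sum_{i=1}^{n-1}y'_i$ and $s_R=\sum_{i=j}^{n}y_i$, the three cases occupy disjoint ranges of $\Delta=D$: deletion of $x_1$ gives $s<\Delta<s+q$; a merge without carry ($c=0$) gives $\Delta=s_R\le s$; a merge with carry ($c=1$) gives $\Delta=(j-1)q+s_R>s+q$. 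So $c$ (and the $j=1$ case) is determined by comparing $\Delta$ with $s$ and $s+q$, and within each case the map $j\mapsto\Delta$ is monotone ($s_R$ is non-increasing in $j$; $(j-1)q+s_R$ is strictly increasing, since the increment is at least $q-y_j\ge 1$), so the position is unique up to sliding within a run of identical symbols, which yields the same codeword. Without these two observations --- or an equivalent direct argument that two distinct insertions cannot both satisfy the syndrome congruence, which the paper sketches in a commented-out alternative --- your proof is incomplete.
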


The following lemmas are crucial to show the correctness of Theorem~\ref{mainresult}.

\begin{lemma}\label{trans-lemma1}
Given $\bx \in \Sigma_q^n$ and let $\by={\rm Diff}(\bx) \in \Sigma_q^n$. Suppose that $\bx'$ is obtained via $\bx$ by a deletion at symbol $x_i$ for $1\le i\le n$. We then have:
\begin{enumerate}[(i)]
\item If $2\le i\le n$, then $y_{i-1} y_{i}$ is replaced by $y_{i-1} + y_i \ppmod{q}$,  
\item If $i=1$, then $y_1$ is deleted in ${\rm Diff}(\bx)$. 
\end{enumerate}
\end{lemma}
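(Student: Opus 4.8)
The plan is to prove both parts by directly computing the differential vector ${\rm Diff}(\bx')$ of the shortened word and comparing it coordinate-by-coordinate with $\by={\rm Diff}(\bx)$. The only ingredients needed are the defining relation $y_j = x_j - x_{j+1} \ppmod{q}$ for $j\le n-1$, the special convention $y_n = x_n$, and the elementary telescoping identity $y_{j} + y_{j+1} = x_{j} - x_{j+2} \ppmod{q}$. The whole argument is a careful index-chase, so the work is in bookkeeping rather than in any clever idea.

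First I would treat part (i) in the interior range $2\le i\le n-1$. Deleting $x_i$ produces $\bx' = (x_1,\ldots,x_{i-1},x_{i+1},\ldots,x_n)$ of length $n-1$. I would verify three things: (a) for coordinates $j\le i-2$ the new differential entry is $x_j-x_{j+1}=y_j$, so the prefix is untouched; (b) the single new coordinate at position $i-1$ equals $x_{i-1}-x_{i+1}$, which by the telescoping identity is exactly $y_{i-1}+y_i \ppmod{q}$; and (c) for $i\le j\le n-1$ the new entry equals $y_{j+1}$, so the suffix is merely reindexed. Together these show that the consecutive pair $y_{i-1}y_i$ is replaced by its modular sum while nothing else changes, which is precisely claim (i).

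The step I expect to require genuine care, rather than mechanical index-chasing, is the boundary case $i=n$, because the last coordinate $y_n=x_n$ obeys a different rule than the interior ones. Here $\bx'=(x_1,\ldots,x_{n-1})$, so its final differential entry is the raw value $x_{n-1}$, and I must confirm this is still consistent with the claimed merge $y_{n-1}+y_n \ppmod{q}$. Substituting the definitions gives $y_{n-1}+y_n = (x_{n-1}-x_n)+x_n = x_{n-1}\ppmod{q}$, so the merge formula holds uniformly for all $2\le i\le n$ in spite of the irregular last coordinate. This is exactly the reason the convention $y_n=x_n$ was adopted, and checking it is the crux that makes part (i) valid without a separate boundary clause.

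Finally, for part (ii) with $i=1$, deleting $x_1$ yields $\bx'=(x_2,\ldots,x_n)$, and a direct shift of indices shows its differential vector is $(y_2,y_3,\ldots,y_n)$: the interior entries satisfy $x_{j+1}-x_{j+2}=y_{j+1}$ and the last entry is $x_n=y_n$. Hence ${\rm Diff}(\bx')$ is exactly $\by$ with its first symbol $y_1$ removed, establishing claim (ii) and completing the lemma.
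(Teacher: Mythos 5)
Your proof is correct and follows essentially the same approach as the paper's: a direct coordinate-by-coordinate computation of ${\rm Diff}(\bx')$, with the key step being the telescoping identity $x_{i-1}-x_{i+1}=y_{i-1}+y_i \ppmod{q}$. You are in fact slightly more careful than the paper, whose computation ${\rm Diff}(\bx')_{i-1}=x_{i-1}-x_{i+1}$ tacitly assumes $i\le n-1$; your explicit check of the boundary case $i=n$, where the convention $y_n=x_n$ makes the merge formula $y_{n-1}+y_n=x_{n-1}\ppmod{q}$ still hold, closes a small gap the paper leaves implicit.
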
 

\begin{proof}
We have $\by={\rm Diff}(\bx)$, where $y_i=x_i - x_{i+1} \ppmod{q}$ for $1\le i\le n-1$ and $y_n=x_n$. 
\vspace{2mm}

If $i=1$, i.e. $x_1$ is deleted in $\bx$, we then have $\bx'=x_2x_3\ldots x_n$. Clearly, ${\rm Diff}(\bx')=y_2y_3\ldots y_n$, or $y_1$ is deleted in ${\rm Diff}(\bx)$. 
\vspace{2mm}

If $2\le i\le n$, a deletion at $x_i$ affects $y_{i-1}, y_i$ in ${\rm Diff}(\bx)$, as $y_{i-1}=x_{i-1}-x_i \ppmod{q}$ and $ y_i=x_i-x_{i+1} \ppmod{q}$. We observe that the change in ${\rm Diff}(\bx')$ is then 
\begin{align*}
{\rm Diff}(\bx')_{i-1} &= x_{i-1}- x_{i+1} \ppmod{q} \\ 
 &= (x_{i-1}-x_{i})+ (x_{i}- x_{i+1})\ppmod{q} \\ 
 &= y_{i-1} + y_i \ppmod{q}.
\end{align*} 
We conclude that $y_{i-1} y_{i}$ is replaced by $y_{i-1} + y_i \ppmod{q}$. 
\end{proof}

\begin{example}
Consider  $\Sigma_4=\{0,1,2,3\}$, and $\bx=0{\color{red}{2}}11301$. We then have  $\by={\rm Diff}(\bx)= {\color{blue}{2 1}} 0 2 3 31$. Suppose that the symbol 2 is deleted in $\bx$, resulting $\bx'=011301$, and ${\rm Diff}(\bx')={\color{green}{3}}02331$. In this example, we observe that, ${\color{red}{x_2}}$ is deleted in $\bx$, and the resulting ${\color{blue}{y_{1} y_{2}=21}}$ in ${\rm Diff}(\bx)$ is replaced by ${\color{green}{3}}=y_{1} + y_2$.  
\end{example}


\begin{lemma}[Parity check lemma]\label{sumsymbol}
Given $n>0$, $q\ge 2,$ and $a\in \bbZ_{qn}$. Consider $\bx\in \Sigma_q^n$ such that ${\rm Syn}({\rm Diff}(\bx)) = a  \ppmod{qn}$. We then have $\sum_{i=1}^n x_i \equiv a \ppmod{q}.$
\end{lemma}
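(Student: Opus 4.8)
The plan is to reduce the given congruence modulo $q$ and then collapse the syndrome of the differential vector by a summation-by-parts (Abel summation) argument. First I would observe that the hypothesis ${\rm Syn}({\rm Diff}(\bx)) \equiv a \ppmod{qn}$ immediately yields ${\rm Syn}({\rm Diff}(\bx)) \equiv a \ppmod{q}$, since $q \mid qn$. Thus it suffices to prove the purely algebraic identity ${\rm Syn}({\rm Diff}(\bx)) \equiv \sum_{i=1}^n x_i \ppmod{q}$, after which $\sum_{i=1}^n x_i \equiv a \ppmod{q}$ follows by transitivity.

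To establish that identity, I would set $\by = {\rm Diff}(\bx)$ and expand ${\rm Syn}(\by) = \sum_{i=1}^n i\, y_i$. Working modulo $q$, the definition $y_i \equiv x_i - x_{i+1}$ for $1 \le i \le n-1$ together with $y_n = x_n$ turns this expression into $\sum_{i=1}^{n-1} i(x_i - x_{i+1}) + n x_n \ppmod{q}$. The core step is then the telescoping manipulation of $\sum_{i=1}^{n-1} i(x_i - x_{i+1})$: re-indexing the $x_{i+1}$ part of the sum by $j = i+1$ and subtracting the two sums shows it equals $\sum_{i=1}^{n-1} x_i - (n-1)x_n$. Adding back the boundary term $n x_n$ then collapses everything to $\sum_{i=1}^n x_i$, which is exactly the desired congruence.

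I do not expect a genuine obstacle here, since the statement is an exact algebraic identity (over $\bbZ_q$) rather than an inequality or an existence claim. The only place that requires care is the bookkeeping of the two boundary contributions: the final coordinate, where $y_n = x_n$ is a bare symbol rather than a difference, and the endpoint terms that survive the Abel summation. Concretely, I would double-check that the coefficient of $x_n$ works out correctly by confirming that the $-(n-1)x_n$ produced by the telescoping cancels against the $+n x_n$ term to leave precisely $+x_n$, and I would verify the whole computation on the small instance in Example~1, where one can check directly that $\sum_i x_i$ and ${\rm Syn}({\rm Diff}(\bx))$ agree modulo $q$, as a sanity check on the index arithmetic.
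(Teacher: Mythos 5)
Your proposal is correct and follows essentially the same route as the paper: reduce the hypothesis modulo $q$ (since $q \mid qn$) and then verify the identity ${\rm Syn}({\rm Diff}(\bx)) \equiv \sum_{i=1}^n x_i \ppmod{q}$ by expanding $\sum_i i\,y_i$ with $y_i = x_i - x_{i+1}$ and telescoping, with the boundary term $n x_n$ cancelling the $-(n-1)x_n$ left over. The paper's proof is just a terser version of the same computation, so there is nothing to add.
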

\begin{proof}
Let $\by={\rm Diff}(\bx)$, where $y_i=x_i - x_{i+1} \ppmod{q}$ for $1\le i\le n-1$ and $y_n=x_n$. Suppose that ${\rm Syn}(\by)=a+kqn$ for some positive integer $k$. We have 
\begin{small}
\begin{align*}
{\rm Syn}(\by) &= \sum_{i=1}^{n-1} iy_i + ny_n  \\
&\equiv \sum_{i=1}^n i(x_i - x_{i+1}) + n x_n \ppmod{q} \\
&\equiv \sum_{i=1}^n x_i \ppmod{q}. 
\end{align*}
\end{small}
Since ${\rm Syn}(\by)=a+kqn$, it implies $\sum_{i=1}^n x_i \equiv a \ppmod{q}.$
\end{proof}



We are now ready to show the correctness of Theorem~\ref{mainresult}. Note that any code that corrects $k$ deletions if and only if it can correct $k$ insertions, as established by Levenshtein \cite{del-in}. Also, a code $\C$ can correct a deletion burst of size exactly (or at most) $k$ if and only if it can correct an insertion burst of size exactly (or at most, respectively) $k$ (refer to Theorem 2, Theorem 3 in \cite{binary-burst}). Therefore, for simplicity, throughout this paper, we present the decoding algorithm to correct deletion errors only.
\vspace{1mm}

\noindent{\bf Proof of Theorem~\ref{mainresult}.} Observe that the lower bound is verified by using the pigeonhole principle. It remains to show that the code ${\rm Diff\_VT}_{a}({n;q})$ can correct a single deletion in linear time. 
\vspace{1mm}

For a codeword $\bx\in {\rm Diff\_VT}_{a}({n;q})$, let $\bx'$ be obtained from $\bx$ after a deletion of symbol $\gamma$ at index $i$, i.e. $x_i=\gamma$. According to Lemma~\ref{sumsymbol}, we can obtain the value of the deleted symbol as follows: $\gamma = a-\sum_{j=1}^{n-1} x_j' \ppmod{q}$. 
It remains to determine the value of $i$, i.e. the location of the deleted symbol. Let $\by={\rm Diff}(\bx)$ and $\by' = {\rm Diff}(\bx')$. We then compute: 
\begin{align*}
\Delta  &= {\rm syn}(\by)-{\rm Syn}(\by')=a-{\rm Syn}(\by') \ppmod{qn}, \text{ and}\\
s &=\sum_{j=1}^{n-1} y_j', \text{ i.e. the sum of symbols in } \by'. 
\end{align*}
Observe that the code's parameters such as $a, q, n$ are known, and the received sequence $\bx'$ and its differential vector $\by'$ are known, hence, the values of $\Delta$ and $s$ can be determined. Let $s_R=\sum_{j=i}^n y_j$. We show how $\by$ can be recovered from $\by'$ and thus $\bx$ can be recovered based on $\Delta$ and $s$, which are computable at the decoder. We now have the following cases. 
\vspace{1mm}

\noindent{\bf Case 1.} If $i=1$, we consider a non-trivial case that $y_1> 0$. Indeed, if $y_1=0$, it implies $x_1=x_2$, and such a deletion in $x_1$ is equivalent to a deletion in $x_2$, which is considered in Case 2. Thus, we obtain $\Delta= y_1 + \sum_{j=1}^{n-1} y_j' = y_1+s > s$ and $\Delta< q+s$. 
\vspace{1mm}

\noindent{\bf Case 2.} If $2\le i\le n$, according to Lemma~\ref{trans-lemma1},  $y_{i-1} y_{i}$ is replaced by $y_{i-1} + y_i \ppmod{q}$. 
\begin{itemize}
\item (2a) If $y_{i-1}+y_{i} \le q-1$, then it is easy to verify that $\Delta=y_i+\sum_{j=i}^{n-1} y_j'=s_R \le s$.
\item (2b) If $q\le y_{i-1}+y_{i} \le 2(q-1)$, then we must have $y'_{i-1}+q=y_{i-1}+y_i$. Consequently, we obtain:
\begin{small}
\begin{align*}
\Delta &=(i-1)q+{\color{blue}{y_i}}+\sum_{j=i}^{n-1} y_j'  \\
&= (i-1)q+{\color{blue}{(q-y_{i-1})+y_{i-1}'}}+\sum_{j=i}^{n-1} y_j'  \\
&= (i-1)q+ (q-y_{i-1}) + s - \sum_{j=1}^{i-2} y_j' \\
&= q+s+(q-y_{i-1}) + \Big( (i-2)q - \sum_{j=1}^{i-2} y_j' \Big) \\
&>q+s. 
\end{align*}
\end{small}
\end{itemize}

Therefore, given the computed values $\Delta$ and $s$, we can distinguish all three cases: case 1, case (2a) and case (2b). Moreover, observe that both $s_R$ and $iq+s_R$ are monotonic functions in the index $i$. Particularly, it is easy to verify that $s_R$ is decreasing in the index $i$ while $iq+s_R$ is increasing function in the index $i$. Hence, $\Delta$ is decreasing in the case (2a) while it is increasing in the case (2b). In other words, given the value of $x_i=\gamma$, there is a unique value of $i$ according to the value of $\Delta$. It is easy to see that, in the case when the deleted symbol belongs to a run of identical symbols, we then have more than one option for the index $i$. Nevertheless, we obtain the same codeword. Consequently, to locate the error in $\by$, for (2a), the decoder scans $\by'$ and simply searches for the first index $h$ where $\sum_{j=h}^{n-1} y_j'>\Delta$, while for (2b), the decoder scans $\by'$ and simply searches for the largest index $h$ where $qh+\sum_{j=h}^{n-1} y_j'<\Delta$. The error location in $\bx$ is then $i=h+1$.

In conclusion, the code ${\rm Diff\_VT}_{a}({n;q})$ can correct a single deletion (or equivalently, a single insertion). \qed 

The following result is immediate. 

\begin{corollary}[{\bf The modified $q$-ary  Differential VT codes}]
Given $n,q$. For an arbitrary $N\ge n, a\in \bbZ_{qN}$, set
\begin{equation*} 
{\rm Diff\_VT}^*_{a}({n;q}) \triangleq \big\{ \bx \in \Sigma_q^n: {\rm Syn}({\rm Diff}(\bx)) = a  \ppmod{qN} \big\}.
\end{equation*}
We then have ${\rm Diff\_VT}^*_{a}({n;q})$ is a single deletion/insertion correcting code.  
\end{corollary}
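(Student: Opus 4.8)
The plan is to show that the decoder $\dec_{\rm error}$ built in the proof of Theorem~\ref{mainresult} works \emph{verbatim} once the modulus $qn$ is replaced by the larger modulus $qN$, so that no genuinely new argument is needed. The only two places where the modulus enters the decoding are (i) the recovery of the value of the deleted symbol via Lemma~\ref{sumsymbol}, and (ii) the recovery of its location from the quantity $\Delta$. I would check that both steps survive the enlargement $N\ge n$, the point being that a larger modulus only widens the unambiguous window in which the relevant integers live.

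For (i), I would first observe that the parity check lemma still holds with modulus $qN$: since $q \mid qN$, the hypothesis ${\rm Syn}({\rm Diff}(\bx)) = a \ppmod{qN}$ still forces $\sum_{i=1}^n x_i \equiv a \ppmod q$, because the proof of Lemma~\ref{sumsymbol} uses only that the modulus is a multiple of $q$. Hence the deleted symbol is still recovered as $\gamma = a - \sum_{j=1}^{n-1} x_j' \ppmod q$.

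For (ii), the key point is that the decoder's monotonicity argument is really applied to the \emph{integer} value $D = {\rm Syn}(\by) - {\rm Syn}(\by')$, and that $D$ always lies in $[0,qn)$, so that its residue coincides with $D$ itself. I would recompute $D$ in each of the three cases of the theorem: in Case~1, $D = y_1 + s$ with $0<y_1\le q-1$; in Case~(2a), $D = s_R = \sum_{j=i}^n y_j$; and in Case~(2b), $D = (i-1)q + s_R$. The only case needing a bound is (2b), where I would check that $(i-1)q + s_R \le (i-1)q + (q-1)(n-i+1)$ and that this right-hand side, as a function of $i$, equals $i - q + (q-1)(n+1)$, which is increasing and hence maximized at $i=n$ with value exactly $qn-1$. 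Thus $0 \le D \le qn-1 < qn \le qN$ in all cases.

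Consequently, under modulus $qN$ the decoder still computes $\Delta = D \bmod qN = D$, since $D < qn \le qN$ rules out any wraparound, and the same monotonicity of $s_R$ (decreasing in $i$) and of $(i-1)q + s_R$ (increasing in $i$) locates the error index uniquely, exactly as in the proof of Theorem~\ref{mainresult}. This shows ${\rm Diff\_VT}^*_{a}(n;q)$ corrects a single deletion, and hence, by the equivalence of deletion- and insertion-correction recalled before that proof, a single insertion as well. The only real \textbf{obstacle} is the short range verification $D<qn$ in Case~(2b); once that bound is secured, the entire decoding transfers without change, and enlarging the modulus merely widens the safe interval $[0,qN)$.
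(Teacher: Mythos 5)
Your proposal is correct and follows exactly the route the paper intends: the paper offers no explicit proof, declaring the corollary ``immediate'' from Theorem~\ref{mainresult}, and your verification that the integer syndrome difference $D={\rm Syn}(\by)-{\rm Syn}(\by')$ always lies in $[0,qn)\subseteq[0,qN)$ (so that no wraparound occurs and the case analysis and monotonicity arguments transfer verbatim) is precisely the justification being elided. Your bound $D\le qn-1$ in Case~(2b) and the observation that Lemma~\ref{sumsymbol} only needs $q\mid qN$ are both accurate.
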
 

\begin{remark}
One may construct a code ${\rm Diff\_VT}_{a}({n;q})$ using different variations of the differential function ${\rm Diff}(\bx)$ as follows. For all values $p$, $1\le p\le q-1$ and ${\rm gcd}(p,q)=1$, this coding method works for all {\em $p$-transformation vector} $\Gamma_p(\bx)$, defined as: 
\begin{equation*}
\left\{ \begin{array}{ll}
y_i &=p(x_i - x_{i+1}) \ppmod{q} \mbox{, for } 1\le i\le n-1, \\ 
y_n &=px_n.
\end{array}\right.
\end{equation*}
Another variation of the differential vector was used in \cite{le:2del,2del:2} for binary codes to correct a burst of at most two deletions. 
\end{remark}

\begin{example}
Given $n=10, q=4, a=0$, $\Sigma_4=\{0,1,2,3\}$. Consider a codeword $\bx=0103112013 \in {\rm Diff\_VT}_{0}({10;4})$. We obtain $\by={\rm Diff}(\bx)=3112032323$. It is easy to verify that ${\rm Syn}(\by)=120\equiv 0 \ppmod{40}$ and $\sum_{i=1}^{10} x_i \equiv 0 \ppmod{4}$. 

Suppose that we receive $\bx'=013112013$, i.e. a deletion occurs at $x_3=0$. We then obtain $\by'={\rm Diff}(\bx')=322032323$. 
Now, to correct $\bx$ and find out the value of $i$, we follow the decoding procedure in Theorem 3 as follows. 
\begin{itemize}
\item From $\bx'$, the decoder finds the value of the deleted symbol, which is $a-\sum_{i=1}^{n-1} x_i' = 0 \ppmod{4}$.
\item From $\by'={\rm Diff}(\bx')=322032323$, the decoder computes: 
\begin{small}
\begin{align*} 
\Delta &= a-{\rm Syn}(\by')=0-104=16 \ppmod{40}, \\
s &= \sum_{i=1}^{n-1} y_i'=3+2+2+3+2+3+2+3=20.
\end{align*}
\end{small}
\item Since $\Delta<s$, the decoder concludes that it belongs to the case (2a) where the deletion is not at the first position, i.e. $i\neq 1$, and $y_{i-1}+y_i<q=4$. 

\item Find the error location in $\by$. It can be observed that $\sum_{h=2}^9 y_i' =17 > \Delta=16$ while $\sum_{h=3}^9 y_i' =15 < \Delta$. The decoder then concludes that the error in $\by$ is at the $h=2$ position, and hence, the error in $\bx$ is at $i=h+1=3$.  

\item To correct $\bx$, it inserts the symbol $0$ to the third position. 
\end{itemize}

We now consider another case, where we receive a sequence $\bx'=010311213$, i.e. a deletion occurs at $x_8=0$. We then obtain $\by'={\rm Diff}(\bx')=311203123$. We verify that $y_7y_8=23$ has been replaced to $y_7'=y_2+y_3=1$ in $\by'$. Now, to correct $\bx$ and find out the value of $i$, we follow the decoding procedure in Theorem 3 as follows. 
\begin{itemize}
\item From $\bx'$, the decoder finds the value of the deleted symbol, which is $a-\sum_{i=1}^{n-1} x_i' = 0-(0+1+0+3+1+1+2+1+3)=0 \ppmod{4}$.
\item From $\by'={\rm Diff}(\bx')=311203123$, the decoder computes:
\begin{align*} 
\Delta &= a-{\rm Syn}(\by')=0-84=36 \ppmod{40}, \\
s &= \sum_{i=1}^{n-1} y_i'=3+1+1+2+3+1+2+3=16.
\end{align*}
\item Since $\Delta>s+q$, the decoder concludes that it belongs to the case (2b) where the deletion is not at the first position, i.e. $i\neq 1$, and $y_{i-1}+y_i>q=4$. 

\item Find the error location in $\by$. It can be observed that $7\times 4+\sum_{h=7}^9 y_i' =34 < \Delta=36$ while $8\times 4\sum_{h=8}^9 y_i' =37 > \Delta$. The decoder then concludes that the error in $\by$ is at the $h=7$ position, and hence, the error in $\bx$ is at $i=h+1=8$.  

\item To correct $\bx$, it inserts the symbol $0$ to the 8th position. 
\end{itemize}
\end{example} 

\begin{example}\label{exam:special}
We now consider a special case when the deleted symbol belongs to a run of identical symbols. Given $n=10, q=3, a=7$, and a codeword $\bx=01021{\color{blue}{222}}00\in {\rm Diff\_VT}_{7}({10;3})$. Suppose that we receive $\bx'=010212200$,  i.e. one can consider a deletion occurs at either $x_6$, or $x_7$, or $x_8$. We observe that $\by={\rm Diff}(\bx)=2111200200$ and $\by'={\rm Diff}(\bx')=211120200$. 
\begin{itemize}
\item The decoder computes $\Delta = a-{\rm Syn}(\by')=2 \ppmod{30}$ and $s=\sum_{j=1}^9 y_j'=9$. Since $\Delta<s$, the decoder concludes that it belongs to the case (2a). 
\item Observe that $\sum_{h=5}^9 y_i' = 4 > \Delta=2$ while $\sum_{h=8}^9 y_i' = 0 < \Delta=2$ (*). The first index where $\sum_{j=h}^{n-1} y_j'>\Delta$ is then $h=5$, i.e. the error in $\bx$ is at $i=h+1=6$. On the other hand, one may also select $h=6,7$ according to (*), i.e. the error is at $x_7$ or $x_8$, respectively. Nevertheless, we obtain the same codeword $\bx=0102122200$.
\end{itemize}
\end{example}

\begin{remark}
It is easy to show that our constructed codes ${\rm Diff\_VT}_{a}({n;q})$, from Construction 3, also support systematic linear-time encoder. The design is similar to the construction of the systematic encoder proposed by Tenengolts \cite{Tene:1984}. For message $\bx\in \Sigma_q^k$, the encoder appends the information of the VT syndrome of the differential vector of $\bx$ (of length $m+1=\ceil{\log_q n}+1$) into its suffix. In addition, there is a marker of length two, which serves as a separator between the data part and the redundancy part (refer to \cite{Tene:1984}). We illustrate the main idea of the encoder in Figure~\ref{fig:2encoders}a.
\end{remark}

\begin{figure}%
    \centering
    \subfloat[A systematic encoder for non-binary codes correcting a single deletion using the differential VT codes ${\rm Diff\_VT}_{a}({n;q})$. Here $m=\ceil{\log_q n}$. The combination ${\color{blue}{011}}$ at the end of the code sequence plays the role of the comma between transmitted sequences.The marker ${\color{blue}{pp}}$, where $p= x_k + 1 \ppmod{q}$ serves as separators between the data part and the redundancy part. Here, the output codewords do not belong to the same coset of the differential VT codes.]{{\includegraphics[width=8cm]{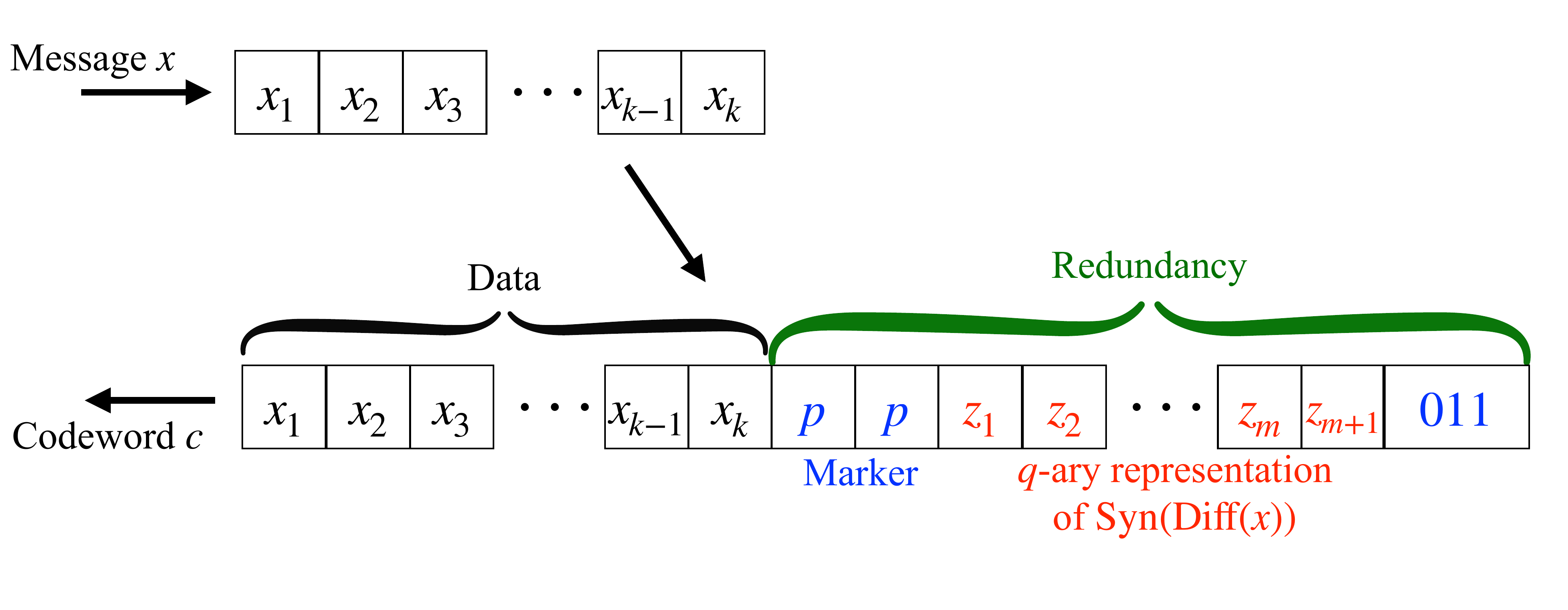} }} 
    \qquad
    \subfloat[An example of our designed linear-time encoder to encode arbitrary messages into the differential VT codes ${\rm Diff\_VT}_{a}({n;q})$ when $q=3$. In general, the VT syndrome ${\rm Syn(\by)}$ is computed in modulo $qn$ while each symbol is computed in modulo $q$. The set $S$ includes index $n$ and all powers of $q$. The message is of length $k=n-\ceil{\log_q n}-1$. Here, the output codewords belong to the same coset of the differential VT codes, i.e. the information of $a$ is known to the decoder.]{{\includegraphics[width=9cm]{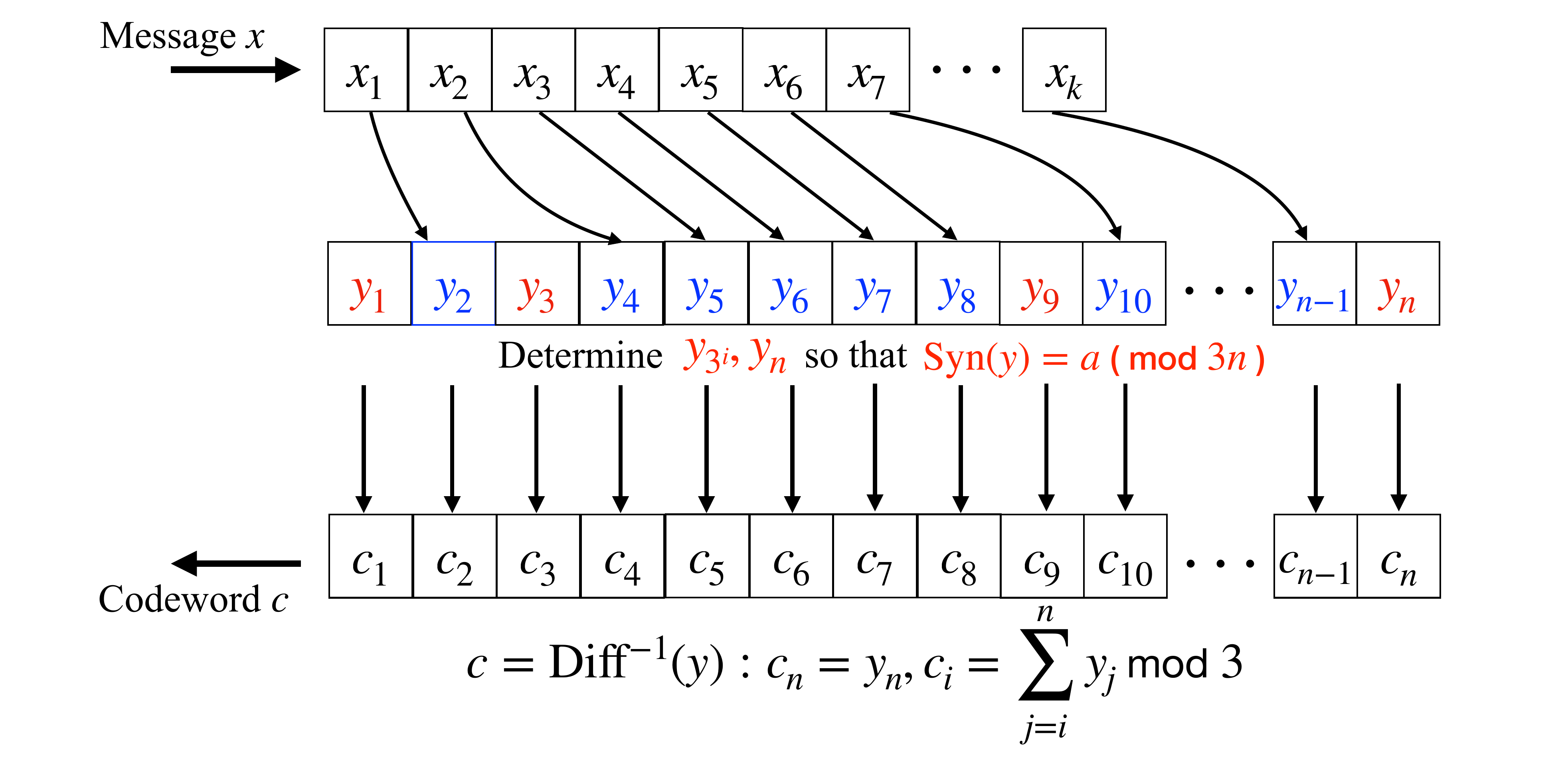} }}%
    \caption{Our proposed encoders for non-binary codes correcting a single deletion using the differential VT codes ${\rm Diff\_VT}_{a}({n;q})$. The construction of a systematic encoder is similar to the work proposed by Tenengolts \cite{Tene:1984}, both incur $\ceil{\log_q n}+3$ redundant symbols, while our best encoder (in Figure (b)) uses only $\ceil{\log_q n}+1$ redundant symbols.}%
    \label{fig:2encoders}%
\end{figure}



\subsection{More Efficient Encoder and Decoder of The Differential VT codes}

In this section, we present a linear-time encoder that encodes user data into the constructed differential VT codes ${\rm Diff\_VT}_{a}({n;q})$ with only $\ceil{\log_q n} + 1$ redundant symbols. 
\vspace{1mm}


\noindent{\bf The differential VT encoder} $\enc_{\rm Diff\_VT}$ 
\vspace{1mm}

{\sc Input}: $n,q,$ and $a\in \bbZ_{qn}$, a sequence $\bx\in \Sigma_q^k$, where $k\triangleq n-\ceil{\log_q n}-1$\\
{\sc Output}: $\bc \triangleq \enc_{\rm Diff\_VT}(\bx)\in {\rm Diff\_VT}_{a}({n;q})$\\[-2mm]
\begin{enumerate}[(I)]
\item Set $m\triangleq\ceil{\log_q n}$ and $S \triangleq \{q^{j-1} : j \in [m]\}\cup \{n\}$ and $I \triangleq [n]\setminus S$. In other words, the set $S$ includes the $n$th index and all the indices that are powers of $q$. 
\item Set $\by=y_1y_2\ldots y_n \in \Sigma_q^n$, where $\by|_I=\bx$ and $\by|_S=0$. In other words, the symbols in $\bx$ are filled into $\by$ excluding indices in $S$ (refer to Figure~\ref{fig:2encoders} (b)) and $y_j=0$ for $j\in S$. 
\item Compute the difference $a'\triangleq a-{\rm Syn}(\by) \ppmod{qn}$. 

In the next step, we modify $\by$, by setting suitable values for $y_j$ where $j\in S$, to obtain ${\rm Syn}(\by)=a \ppmod{qn}$. 
Since $0\le a'\le qn-1$, we find $\beta$, $0\le \beta<q-1$, to be the number such that $\beta n \le a' < (\beta+1)n$. 
\item The values for $y_j$ where $j\in S$ are set as follows. 
\begin{itemize}
\item Set $y_n=\beta$, and $a''=a'-\beta n < n$.
\item Let $z_{t-1}\ldots z_1z_0$ be the $q$-ary representation of $a''$. Clearly, since $a''<n$, the $q$-ary representation of $a''$ is of length at most $m=\ceil{\log_q n}$. We then have $a'' = \sum_{i=0}^{m-1} z_i q^i$. 
\item Set $y_{q^{j-1}}=z_{j-1}$ for $j\in [m]$. 
\end{itemize}
\item Set $\bc={\rm Diff}^{-1}(\by)$. In other words, we set $c_n=y_n$ and $c_i=\sum_{j=i}^n y_j \ppmod{q}$ for $1\le i\le n$. 

\item Output $\bc$.
\end{enumerate} 
\vspace{1mm}

\begin{theorem}\label{main-encoder} Our constructed encoder $\enc_{\rm Diff\_VT}$ is correct and has redundancy $\ceil{\log_q n}+1$ symbols. 
In other words, $\enc_{\rm Diff\_VT}(\bx)\in {\rm VT}_{a}({n;q})$ for all $\bx\in\Sigma_q^{n-\ceil{\log_q n}-1}$. 
\end{theorem}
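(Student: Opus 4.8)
The plan is to treat the two assertions separately. The redundancy claim is essentially a counting statement: the message $\bx$ is copied verbatim into the coordinates $\by|_I$ with $I=[n]\setminus S$, so the message length equals $|I|$ and the redundancy is $n-|I|=|S|$; it therefore suffices to check that $|S|=m+1$. For the correctness claim, the key reduction is that ${\rm Diff}$ is a bijection and $\bc={\rm Diff}^{-1}(\by)$, so ${\rm Diff}(\bc)=\by$ and membership $\bc\in{\rm Diff\_VT}_a(n;q)$ is exactly the condition ${\rm Syn}(\by)\equiv a\ppmod{qn}$. Thus the entire argument reduces to tracking the syndrome of $\by$ as it is built in steps (II)--(IV).

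First I would pin down the index bookkeeping. Since $m=\ceil{\log_q n}$ we have $q^{m-1}<n\le q^m$, so the powers $q^0,q^1,\dots,q^{m-1}$ are $m$ distinct indices all strictly smaller than $n$; adjoining the index $n$ gives $|S|=m+1$ distinct valid indices and hence $|I|=n-m-1=k$. This settles the redundancy and simultaneously guarantees that the $m$ positions storing the $q$-ary digits are distinct from the position $n$ storing $\beta$, so the digit placements in step (IV) do not collide.

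Next I would check that the assignment in step (IV) is always feasible and then compute the syndrome. Writing $\sigma_0\triangleq{\rm Syn}(\by)$ for $\by$ with $\by|_S=0$ after step (II), we have $a'\equiv a-\sigma_0\ppmod{qn}$ with $0\le a'<qn$. Then $\beta=\floor{a'/n}$ satisfies $0\le\beta\le q-1$, so $y_n=\beta\in\Sigma_q$ is a legal symbol, and $a''=a'-\beta n$ obeys $0\le a''<n\le q^m$, so its $q$-ary expansion $\sum_{i=0}^{m-1}z_iq^i$ has at most $m$ digits and $y_{q^{j-1}}=z_{j-1}$ is well-defined. The only coordinates of $\by$ that change relative to step (II) lie in $S$, and their total contribution to ${\rm Syn}(\by)=\sum_i i\,y_i$ becomes
\begin{equation*}
n\beta+\sum_{j=1}^{m}q^{j-1}z_{j-1}=\beta n+a''=a'.
\end{equation*}
Therefore ${\rm Syn}(\by)\equiv\sigma_0+a'\equiv\sigma_0+(a-\sigma_0)\equiv a\ppmod{qn}$, which gives $\bc\in{\rm Diff\_VT}_a(n;q)$. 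Injectivity, needed for a decoder to exist, is immediate: $S$ and $I$ depend only on $n,q$, so one recovers $\bx$ by forming ${\rm Diff}(\bc)$ and reading off the coordinates in $I$.

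I do not expect a genuine obstacle; once the indexing is fixed the rest is a direct computation. The one place deserving care, and the step I would write out most explicitly, is the cardinality-and-feasibility bookkeeping driven by the bound $q^{m-1}<n\le q^m$ coming from $m=\ceil{\log_q n}$: it is what makes $S$ have exactly $m+1$ distinct entries (so the redundancy is exactly $\ceil{\log_q n}+1$, not larger), what forces $\beta\in\Sigma_q$, and what guarantees $a''$ fits into the $m$ allotted $q$-ary digits. In particular the strict inequality $q^{m-1}<n$ is what rules out $n$ coinciding with one of the stored powers of $q$.
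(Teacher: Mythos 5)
Your proposal is correct and follows essentially the same route as the paper's proof: reduce membership in ${\rm Diff\_VT}_a(n;q)$ to ${\rm Syn}(\by)\equiv a\ppmod{qn}$ via the bijectivity of ${\rm Diff}$, then verify that the positions in $S$ contribute exactly $n\beta+a''=a'$ to the syndrome. Your additional bookkeeping (that $q^{m-1}<n\le q^m$ forces $|S|=m+1$, $\beta\le q-1$, and $a''$ fitting in $m$ digits) is left implicit in the paper but is a worthwhile explicit check, and it also correctly fixes the paper's apparent typo $0\le\beta<q-1$ to $0\le\beta\le q-1$.
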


\begin{proof}
We observe that the user message is of length $k=n-\ceil{\log_q n}-1$, and hence, the redundancy of the encoder is $\ceil{\log_q n} + 1$ symbols. It remains to show that  $\enc_{\rm Diff\_VT}(\bx)\in {\rm VT}_{a}({n;q})$ for all $\bx\in\Sigma_q^{k}$.
\vspace{1mm}

Suppose that $\bc=\enc_{\rm Diff\_VT}(\bx)$ for some $\bx\in\Sigma_q^k$. It suffices to show that 
${\rm Syn}({\rm Diff}(\bc))=a \ppmod{qn}.$ 
From Step (V) of the Encoder 2, $c={\rm Diff}^{-1}(\by)$, in other words, $\by={\rm Diff}(\bc)$. It remains to show that
${\rm Syn}(\by)=a \ppmod{qn}.$ 

Recall that from Step (I) of Encoder 2, $S \triangleq \{q^{j-1} : j \in [m]\}\cup \{n\}$ and $I \triangleq [n]\setminus S$. Therefore,  

\begin{small}
\begin{align*}
{\rm Syn}(\by) &= \sum_{j\in S}  jy_j+ \sum_{j\in I} jy_j \ppmod{qn}\\
&= \sum_{j\in [m]}  q^{j-1} y_j+ n y_n+ \sum_{j\in I} jy_j \ppmod{qn}\\
&= a'' + n \beta + (a-a') \ppmod{qn}\\
&= (a'-\beta n) + n \beta + a-a' \ppmod{qn}\\
&= a \ppmod{qn}. \qedhere
\end{align*} 
\end{small}
\end{proof}

We illustrate Encoder $\enc_{\rm Diff\_VT}$ via an example. 

\begin{example}
Consider $n=10, q=3$ and $a=0$. 
Then $m=\ceil{\log_3 10}=3$ and $k=10-3-1=6$.
Suppose that the message is $\bx=220011$ and we compute $\bc= \enc_{\rm Diff\_VT}(\bx)\in {\rm Diff\_VT}_{0}({10;3})$.

\begin{enumerate}[(I)]
\item Set $S=\{1,3,9,10\}$ and $I=\{2,4,5,6,7,8\}$.
\item The encoder first sets $\by={\color{red}{y_1}}2{\color{red}{y_3}}20011{\color{red}{y_9}}{\color{red}{y_{10}}}$.
It then sets $y_1=y_3=y_9=y_{10}=0$ to obtain $\by={\color{red}{0}}2{\color{red}{0}}20011{\color{red}{00}}$ and computes $a'=a- {\rm Syn}(\by)=0-27= 3 \ppmod{30}.$
\item Since $0<a'=3<10$, the encoder sets $\beta=0$ and $a''=a'=3$. It then sets $y_{10}=\beta=0$. 
\item The $3$-ary representation of $3$ is then $010$. 
Therefore, the encoder sets $y_1=0$, $y_3=1$, and $y_9=0$ to obtain $\by=0212001100$. We can verify that ${\rm Syn}(\by)=0  \ppmod{30}$.
\item The encoder outputs $\bc={\rm Diff}^{-1}(\by)=1121222100$. 
\end{enumerate}
\end{example}

For completeness, we state the corresponding decoder as follows.
\vspace{1mm}

\noindent{\bf The differential VT decoder} $\dec_{\rm Diff\_VT}$. Given $n,q,$ and $a\in \bbZ_{qn}$, $m\triangleq\ceil{\log_q n}$ and $k\triangleq n-m-1$. Given $\bc=\enc_{\rm Diff\_VT}(\bx)$ for some message $\bx\in \Sigma_q^k$, and suppose the decoder receives a sequence $\bc'$.  
\vspace{1mm}

{\sc Input}: $\bc'\in \Sigma_q^{n-1}\cup \Sigma_q^{n}\cup \Sigma_q^{n+1}$\\
{\sc Output}: $\bx=\dec_{\rm Diff\_VT}(\bc') \in\Sigma_q^k$\\[-2mm]
\begin{enumerate}[(I)]
\item The decoder follows the error-decoding procedure in Theorem 3 to obtain  $\bc\triangleq \dec_{\rm error}(\bc') \in \Sigma_q^n$.
\item Set $\by={\rm Diff}(\bc) \in \Sigma_q^n$, $y_i=c_i-c_{i+1} \ppmod{q}$ for $1\le i\le n-1$ and $y_n=c_n$.
\item Set $S \triangleq \{q^{j-1} : j \in [m]\}\cup \{n\}$ and $I \triangleq [n]\setminus S$.
\item Output $\bx=\by|_I \in \Sigma_q^k$.
\end{enumerate} 
\vspace{1mm}

To conclude this section, the efficiency of our proposed encoders, compared to previous works, is illustrated in Table~\ref{compare}.

\begin{table*}[h!]
\centering 
 \begin{tabular}{ |P{3cm}|P{3cm}| P{2.5cm}| P{2.5cm}| P{2cm}| P{2cm}| P{1cm}|}
 \hline
 Encoder &  Redundancy (in symbols) &  Encoding/Decoding Complexity &  Receiver Information on Code's Parameters &  Encoder Output &  Remark  \\[1ex]
 \hline
 Encoder proposed by Tenengolts \cite{Tene:1984} using ${\rm T}_{a,b}(n;q)$  & {\color{black}{$\ceil{\log_q n}+3+\ceil{\log_q 3}$}}   &  {\color{blue}{$O(n)$}} &  {\color{black}{not available}} & {\color{black}{not in ${\rm T}_{a,b}(n;q)$}} &  {\color{blue}{systematic}}  \\
 \hline
  Encoder proposed by Abroshan \et{} \cite{M:2018} using ${\rm T}_{a,b}(n;q)$  &  {\color{black}{ $> \log_q n+\log_2 n$}}   &  {\color{blue}{$O(n)$}} &  {\color{blue}{VT Syndrome and parity check}} &   {\color{blue}{in ${\rm T}_{a,b}(n;q)$}} &  {\color{blue}{systematic}}  \\
 \hline
Systematic encoder proposed in this work using ${\rm Diff\_VT}_{a}(n;q)$ (see Figure 1a)   &   {\color{black}{$\ceil{\log_q n}+3+\ceil{\log_q 3}$}}   &  {\color{blue}{$O(n)$}} & {\color{black}{not available}} & {\color{black}{not in ${\rm Diff\_VT}_{a}(n;q)$}} & {\color{blue}{systematic}}  \\
 \hline
Encoder $\enc_{\rm Diff\_VT}$ proposed in this work using ${\rm Diff\_VT}_{a}(n;q)$ (see Theorem~\ref{main-encoder} and Figure 1b)  &   {\color{blue}{$\ceil{\log_q n}+1$}}   &  {\color{blue}{$O(n)$}} &  {\color{blue}{VT Syndrome and parity check}} &  {\color{blue}{in ${\rm Diff\_VT}_{a}(n;q)$}} & {\color{black}{non-systematic}}  \\
\hline
\end{tabular}
\caption{Efficient encoders for $q$-ary codes correcting single deletion or insertion proposed in this work and and those in literature. For each design category, the most desirable option is highlighted in blue. Particularly, our proposed encoder $\enc_2$ incurs the least redundancy of $\ceil{\log_q n}+1$ symbols. Here, the receiver information on code's parameters plays an important role in error-detecting and error-correcting procedure. For example, it may provide more efficient basis for the design of segmented deletion/insertion correcting codes (see \cite{M:segment, Liu:2010, Cai:segment}).}
\label{compare}
\end{table*}

\section{Correcting a Burst of Fixed Length: The Differential Shifted VT Codes} 
For arbitrary fixed $t>1$, binary codes correcting a burst of exactly $t$ deletions were proposed in \cite{cheng-burst, binary-burst}. Recently, Schoeny \et{} \cite{schoenyDNA} extended the construction of binary codes in \cite{binary-burst} to the non-binary regime. To correct a burst of exactly $t$ deletions, for both the binary and non-binary cases, a common idea is to represent the codewords of length $n$ as a $t \times n/t$ codeword array, where $t$ divides $n$. Thus, for a codeword $\bx$, the codeword array $A_t(\bx)$ is formed by $t$ rows and $n/t$ columns. When $n/t$ is not an integer, one can append a sufficient number of bits/symbols $0$ into the suffix of each codeword (for example, see \cite{liveDNA}). In this work, for simplicity, we assume that $t$ divides $n$. Observe that a burst of $t$ deletions deletes in $\bx$ exactly one bit (in binary case) or one symbol (in a non-binary alphabet) from each row of the array $A_t(\bx)$.
\begin{equation*}
A_t(\bx)=
\left[
  \begin{array}{cccccc}
    x_1 & x_{t+1} & \cdots & x_{(j-1)t+1} & \cdots & x_{(n/t-1)t+1}\\
    x_2 & x_{t+2} & \cdots & x_{(j-1)t+2} & \cdots & x_{(n/t-1)t+2}\\
    \vdots & \vdots & \ddots & \vdots & \ddots & \vdots \\
      x_t & x_{2t} & \cdots & x_{jt} & \cdots & x_{n}\\
  \end{array}
\right].
\end{equation*}
Here, the $i$th row of the array is denoted by $A_t(\bx)_{\bf {\rm i}}$, and the $j$th column of the array is denoted by $A_t(\bx)^{\bf {\rm j}}$. We now briefly describe the coding methods in \cite{schoenyDNA} to correct a burst of exactly $t$ deletions in the general $q$-ary alphabet, $q\ge2$. The overall coding strategy in \cite{schoenyDNA} is split into two main parts. 

\begin{itemize}
\item The first row in the array belongs to a $q$-ary VT-code ${\rm T}_{a,b}({n;q})$ (refer to Construction~\ref{cons2}, Section II) that can correct a single error. In addition, such a code has an additional {\em run-length-limited (RLL)} property, that restricts the longest run of identical symbols to be at most $\ell=\ceil{\log_q n}+O(1)$. The authors also showed that for sufficiently large $n$, there exists a {\em runlength-limited encoder} which uses only one redundancy symbol to enforce such an RLL property. A similar design of such an encoder for binary codes was proposed in \cite{binary-burst}, that enforces binary codewords of maximum run length at most $\ceil{\log n} + 3$  with only one redundant bit (see \cite[Appendix B]{binary-burst}). The method is based on the {\em sequence replacement technique}. The idea can be extended to non-binary codes whose maximum runlength is at most $\ceil{\log_q n}+3$ with only one redundant symbol (for example, see \cite{Nguyen:2021}). 

\item Each of the remaining $(t- 1)$ rows in the array is then encoded using a modified version of the VT-code, 
which they refer as {\em shifted VT (SVT) code}. 
This code corrects a single deletion in each row provided the location of the error is known to be within $P$ consecutive positions. 
To obtain the desired redundancy, Schoeny \et{} also set $P=\ell+1=\ceil{\log_q n}+O(1)$.
\end{itemize}

\begin{lemma}[Nguyen \et{} \cite{Nguyen:2021}]\label{RLLencoder}
Given $n,q$, $\ell=\ceil{\log_q n} + 3$. There exist a linear-time encoder $\enc_{\ell\_{\rm RLL}}: \Sigma_q^{n-1} \to \Sigma_q^n$ and a corresponding decoder $\dec_{\ell\_{\rm RLL}}: \Sigma_q^{n} \to \Sigma_q^{n-1}$ such that the following conditions hold:
\begin{itemize}
\item For all $\bx\in\Sigma_q^{n-1}$, we have $\dec_{\ell\_{\rm RLL}}\circ\enc_{\ell\_{\rm RLL}}(\bx) = \bx$, 
\item If $\bc=\enc_{\ell\_{\rm RLL}}(\bx)$ then the maximum run of identical symbols in $\bc$ is at most $\ell$.
\end{itemize}
The redundancy of the encoder $\enc_{\ell\_{\rm RLL}}$ is one redundant symbol. 
\end{lemma}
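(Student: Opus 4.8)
The plan is to prove the lemma constructively, via the \emph{sequence replacement technique} that underlies the binary construction in \cite[Appendix B]{binary-burst} and its $q$-ary extension in \cite{Nguyen:2021}; only the bookkeeping needs to be reworked for a general alphabet. Write $p \triangleq \ceil{\log_q n}$, so that the target bound is $\ell = p+3$ and a pointer to any position of $[n]$ costs exactly $p$ symbols in base $q$. The encoder $\enc_{\ell\_{\rm RLL}}$ starts from the input $\bx\in\Sigma_q^{n-1}$, appends a single \emph{flag} symbol to reach length $n$, and marks the word ``clean''; the single appended symbol is the claimed one-symbol redundancy. It then repeats the following replacement step: scan left to right for the first run of identical symbols of length at least $\ell+1 = p+4$; when such a run is found starting at position $i$, delete $p+2$ of its copies (leaving a remnant of length at least two that records the run's symbol value) and append at the tail a block of $p+2$ symbols consisting of a $p$-symbol pointer encoding $i$ together with a $2$-symbol marker that flips the flag to ``processed''. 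Since $p+2$ symbols are deleted and $p+2$ are appended, the length stays $n$ throughout.

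For correctness I would argue three things. First, the output has no run longer than $\ell$: the loop exits only when no run of length $\ge p+4$ remains, and the marker is a short reserved pattern chosen so that no run longer than $\ell$ can form inside an appended block (each block has length $p+2<\ell$), at the junction between the shortened body and a block, or between consecutive blocks. Second, the process terminates: take the potential $\Phi(\bc)$ equal to the number of symbols lying in runs of length $>\ell$; a deletion removes at least $p+2$ such symbols from the body, while the appended block contributes none, so $\Phi$ strictly decreases at every step and the loop halts after at most $n$ iterations. Third, the map is invertible: the decoder $\dec_{\ell\_{\rm RLL}}$ reads the trailing flag; if it is ``clean'' it simply strips the final symbol to return $\bx$, and otherwise it parses the trailing marker and $p$-symbol pointer, deletes the appended block, reinserts $p+2$ copies of the symbol currently found at the pointed position, and recurses. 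Because the blocks are appended in order and each undo is the exact inverse of the corresponding deletion, undoing them in last-in-first-out order reconstructs the encoder's word state at every stage and hence recovers $\bx$.

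The delicate part, and the one I would spend the most care on, is the constant-size marker: it must simultaneously let the decoder decide unambiguously whether the tail is the ``clean'' flag or a ``processed'' block, be parseable to extract the pointer, and guarantee that no appended material ever merges with its neighbour into a run exceeding $\ell$. These constraints are all satisfiable with only a couple of reserved symbols, but verifying that every junction stays below $\ell$ for \emph{all} $q$ (the small-alphabet case $q=2$ being the tightest) is where the real work lies, and this is precisely the point handled in \cite{binary-burst,Nguyen:2021}. Finally, for the linear-time claim I would note that a deletion only shortens a run from its interior, so it never disturbs the already-scanned prefix and never creates a new violation to the left of the scan pointer; the scan pointer is therefore monotone non-decreasing, appended blocks are clean by construction, and a single left-to-right pass with $O(1)$ amortised work per position yields an $O(n)$ encoder, with the decoder's last-in-first-out undo being linear for the same reason.
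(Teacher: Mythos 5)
This lemma is imported from \cite{Nguyen:2021} (with its binary antecedent in \cite[Appendix~B]{binary-burst}); the paper states it without proof, so there is no in-paper argument to compare against. Your sketch reconstructs exactly the method the paper attributes to those references --- the sequence replacement technique --- and its skeleton is sound: length preservation because each replacement deletes and appends the same number $p+2$ of symbols, termination via a potential that strictly decreases because a deletion from the interior of a run only shortens that run and cannot create a new violation, invertibility via last-in-first-out undoing (which is what makes the $p$-symbol pointer interpretable against the correct intermediate word state, and lets the length-two remnant supply the run's symbol value), and linearity from the monotone scan pointer. These are the right ingredients and the right accounting ($p=\ceil{\log_q n}$ pointer symbols plus a constant-size marker inside a block of length $p+2<\ell$).

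The one genuine gap is the one you yourself flag: the existence of a constant-size marker that simultaneously (i) distinguishes the ``clean'' flag from a ``processed'' block at the tail, (ii) makes the block boundaries parseable in LIFO order, and (iii) guarantees that no run exceeding $\ell$ ever forms at the body/block junction, between consecutive blocks, or around the exposed flag --- uniformly in $q$, with $q=2$ the tight case. As written, your argument asserts that such a marker exists and defers the verification to the cited works; but that verification is essentially the entire content of \cite[Appendix~B]{binary-burst} and of the $q$-ary extension in \cite{Nguyen:2021}, so the proposal is an accurate outline of the known proof rather than a self-contained one. Note also that the junction between the shrinking body and the first appended block moves every time a run near the end of the body is shortened, so condition (iii) must be re-verified after every replacement, not just once; this is exactly why the marker symbols are chosen to differ from their neighbours on both sides. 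If you carry out that case analysis explicitly (and check that two consecutive blocks, each of length $p+2$, cannot concatenate into a run of length $2p+4>\ell$), the proof is complete.
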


\begin{definition}[\cite{schoenyDNA, binary-burst}] A {\em $P$-bounded single-deletion-correcting code} is a code in which the decoder can correct a single deletion given knowledge of the location of the deleted symbol to be within $P$ consecutive positions.
\end{definition}

Formally, the following results were provided by Schoeny \et{} \cite{schoenyDNA}. Recall that the signature vector of a $q$-ary vector $\bx$ of length $n$ is a binary vector $\alpha(\bx)$ of length $n-1$, where $\alpha(x)_i=1$ if $x_{i+1}\geq x_i$, and $0$ otherwise, for $i\in[n-1]$.


\begin{construction}[$q$-ary Shifted VT Codes \cite{schoenyDNA}]\label{shiftVTq}
For $0\leq a\le P$ and $0\le b<q$, $c \in \{0,1\}$, the $q$-ary shifted VT-code ${\rm SVT}_{a,b,c}(n,P,q)$ is defined as:
\begin{small}
\begin{align*}
{\rm SVT}_{a,b,c}(n,P,q) \triangleq \{\bx=(x_1,x_2,\ldots x_n): {\rm Syn}(\alpha(\bx)) = a \ppmod{P+1} \text{ and } \sum_{i=1}^{n} x_i = b \ppmod{q}  \text{ and } \sum_{i=1}^{n-1} \alpha(x)_i = c \ppmod{2}\}.
\end{align*}
\end{small}
\end{construction}

\begin{lemma}[Schoeny \et{} \cite{schoenyDNA}]\label{thm:svt}
The code ${\rm SVT}_{a,b,c}(n,P,q)$ is a P-bounded single deletion correcting code. 
\end{lemma}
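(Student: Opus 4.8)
The plan is to reduce single-deletion correction over the $q$-ary alphabet to single-deletion correction over the binary signature vector, and to show that the three congruences in Construction~\ref{shiftVTq} supply exactly the three pieces of information needed to invert a deletion: the value of the deleted symbol, the value of the corresponding deleted signature bit, and the location of the error within the prescribed window. First I would establish the following structural fact, the signature analogue of Lemma~\ref{trans-lemma1}: if $\bx'$ is obtained from $\bx$ by deleting the symbol $x_i$, then $\alpha(\bx')$ is obtained from $\alpha(\bx)$ by a \emph{single} bit deletion. Indeed, for $2\le i\le n-1$ the deletion of $x_i$ replaces the two signature bits $\alpha(x)_{i-1}\alpha(x)_{i}$ — the comparisons $[x_i\ge x_{i-1}]$ and $[x_{i+1}\ge x_i]$ — by the single bit $[x_{i+1}\ge x_{i-1}]$, and a short case analysis shows this new bit always equals $\alpha(x)_{i-1}$ or $\alpha(x)_{i}$. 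Hence the net effect on $\alpha(\bx)$ is the deletion of one bit at a position $p\in\{i-1,i\}$ with a well-defined deleted value $s$; the boundary cases $i=1$ and $i=n$ simply delete $\alpha(x)_1$ and $\alpha(x)_{n-1}$. Consequently, if the original deletion is confined to $P$ consecutive positions, the induced signature deletion stays within a window of size $P+1$.

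Next I would give the decoder. From the sum constraint the deleted symbol value is recovered immediately as $v \equiv b-\sum_i x_i' \ppmod q$, since all surviving symbols are known. From the parity constraint, deleting a signature bit of value $s$ changes $\sum_i \alpha(x)_i$ by exactly $s\in\{0,1\}$, so the decoder obtains $s\equiv c-\sum_i \alpha(x')_i \ppmod 2$, which determines $s$ uniquely. With $s$ fixed, the VT-syndrome congruence on the signature localizes the error: the decoder computes the deficiency $a-{\rm Syn}(\alpha(\bx'))\ppmod{P+1}$ and uses it to pin down the exact position at which the bit $s$ must be reinserted into $\alpha(\bx')$, thereby reconstructing $\alpha(\bx)$. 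Having both $\alpha(\bx)$ and $v$, one reinserts $v$ into $\bx'$ at the unique location — unique up to shifting inside a run of equal symbols, which does not change $\bx$ — that restores the recovered signature, yielding $\bx$.

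The step I expect to be the main obstacle is the localization claim, which is the genuinely ``shifted'' part of the argument. I would need to show that, once the deleted bit value $s$ is fixed and the error is confined to the window, distinct feasible deletion positions produce distinct values of ${\rm Syn}(\alpha(\cdot))\ppmod{P+1}$. The key computation is that sliding the reinserted bit $s$ past one coordinate of $\alpha(\bx')$ changes the signature syndrome by a controlled amount — by $\pm1$ when it crosses a bit of the opposite value, and producing the same string when it crosses a bit of value $s$ — so the feasible reconstructions correspond to syndrome values lying in an interval of length at most $P$, which the reduction modulo $P+1$ then separates. Some care is needed because the signature window has size $P+1$ rather than $P$, but the map $i\mapsto(p,s)$ from the first step yields at most $P$ distinct codewords, so no genuine collision occurs; this mirrors the binary shifted-VT analysis of Schoeny \et{} \cite{schoenyDNA,binary-burst} and reduces to it once the structural reduction is in place.
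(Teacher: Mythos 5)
Your argument is correct, but note first that the paper does not actually prove this lemma: it is imported verbatim from Schoeny \et{} \cite{schoenyDNA}, so there is no in-paper proof to match against. The closest in-paper analogue is Lemma~\ref{thm:dsvt} for the differential shifted VT codes, and your route is genuinely different from that one. You argue constructively through the signature vector: (i) a deletion of $x_i$ induces a single bit deletion in $\alpha(\bx)$ because $[x_{i+1}\ge x_{i-1}]$ always coincides with $\alpha(x)_{i-1}$ or $\alpha(x)_i$ (your case analysis is right: when the two bits agree, transitivity forces the merged bit to agree with both); (ii) the mod-$q$ sum gives the deleted symbol, the mod-$2$ signature parity gives the deleted signature bit $s$, and the syndrome mod $P+1$ separates the at most $P+1$ candidate reinsertion positions of $s$ because sliding $s$ across one coordinate of $\alpha(\bx')$ shifts the syndrome by $s-\alpha(x')_p\in\{0,\pm1\}$, so distinct candidate strings have syndromes spanning an interval of length at most $P$; (iii) the recovered signature plus the recovered symbol value pin down $\bx$, since two distinct reinsertions of $v$ with equal signatures would force a monotone cycle $v\le x'_{i_1}\le\cdots\le x'_{i_2-1}\le v$ (or a strict descending contradiction), hence the same codeword. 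All three steps are sound, and step (iii), which you leave slightly implicit, does deserve the one-line transitivity argument just sketched. By contrast, the paper's proof of its own Lemma~\ref{thm:dsvt} works on the differential vector rather than the signature and proceeds by contradiction: it uses the mod-$(q+1)$ sum to decide whether the merged differential symbol wrapped around modulo $q$, then extracts from two hypothetical reinsertion positions a pair of length-$(\le P+1)$ words lying in a common differential VT code ${\rm Diff\_VT}^*$ that share a subsequence, contradicting Theorem~\ref{mainresult}. Your approach buys an explicit decoder and stays faithful to the cited source; the paper's approach for the differential variant avoids any case analysis on signatures by leaning on the already-proved single-deletion property of the short code. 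Both are valid; neither subsumes the other.
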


\begin{theorem}[Schoeny \et{} \cite{schoenyDNA}]
There exists a $q$-ary code correcting a burst of exactly $t$ deletions whose number of redundancy symbols is at most 
\begin{equation*}
\log_q (n/t) + (t-1) \log_q (2(\log_q (n/t)+6) + t +1.
\end{equation*}
In term of bits, the redundancy is $\log n+(t-1) \log \log n+ O(t\log q)$ bits.  
\end{theorem}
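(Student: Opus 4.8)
The plan is to make precise the two-part array strategy already outlined before the statement. First I would fix the representation: write each candidate codeword $\bx\in\Sigma_q^n$ as the $t\times(n/t)$ array $A_t(\bx)$ whose $(i,j)$ entry is $x_{(j-1)t+i}$, so that reading the array column-by-column recovers $\bx$. The crucial elementary observation is that a burst of exactly $t$ consecutive deletions removes exactly one entry from each row, and since $t$ consecutive positions span a single period of the column-interleaving, these deleted entries occupy at most two adjacent columns; hence their column indices differ by at most one across the $t$ rows. This reduces the problem to correcting a single deletion per row, where the errors are nearly column-aligned.

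The construction then constrains the rows separately. I would require the first row $A_t(\bx)_{1}$ to lie in a $q$-ary VT code ${\rm T}_{a,b}(n/t;q)$ (Construction~\ref{cons2}) that additionally satisfies the run-length-limited property with maximum run $\ell=\ceil{\log_q(n/t)}+3$; by Lemma~\ref{RLLencoder} this RLL constraint costs only a single redundant symbol. Each of the remaining $t-1$ rows I would require to lie in a shifted VT code ${\rm SVT}_{a_i,b_i,c_i}(n/t,P,q)$ (Construction~\ref{shiftVTq}) with window parameter $P=\ell+1$.

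Decoding proceeds in two stages. Because the burst deletes a single symbol from the first row, the single-deletion-correcting property of ${\rm T}_{a,b}(n/t;q)$ recovers that row exactly and in particular pins down the deleted column up to the run of identical symbols containing it. This is where the main obstacle lies: I must argue that this information localizes the deletion in every other row to a block of $P$ consecutive columns. The RLL property bounds the run ambiguity in the first row by $\ell$, and combining this with the unit column shift between rows is exactly the delicate accounting in Schoeny \et{} that yields a window of size $P=\ell+1$. Given this localization, Lemma~\ref{thm:svt} lets the $P$-bounded SVT decoder of each of the remaining $t-1$ rows correct its single deletion, and reassembling the columns recovers $\bx$.

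Finally I would tally the redundancy by counting coset parameters: $\log_q(n/t)$ symbols for the VT syndrome $a$ of the first row, one symbol for each of the $t$ sum parities $b,b_i$, one symbol for the RLL encoder, and $\log_q\!\big(2(P+1)\big)$ symbols for the syndrome $a_i$ together with the signature parity $c_i$ of each of the $t-1$ shifted VT rows. Using $P+1=\ell+2=\ceil{\log_q(n/t)}+5\le\log_q(n/t)+6$, these sum to at most $\log_q(n/t)+(t-1)\log_q\!\big(2(\log_q(n/t)+6)\big)+t+1$ symbols, which becomes $\log n+(t-1)\log\log n+O(t\log q)$ bits after converting the base. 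The only genuinely nonroutine ingredient is the window bound $P=\ell+1$; everything else follows from the cited VT, RLL, and SVT lemmas.
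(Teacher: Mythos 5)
Your proposal is correct and follows exactly the row-array strategy that the paper itself sketches for this cited result of Schoeny \et{} (and reuses in the proof of its own Theorem~\ref{thm:t-burst-qary}): an RLL-constrained $q$-ary VT code in the first row to recover the burst and localize it to within a run of length $\ell$, shifted VT codes with window $P=\ell+1$ in the remaining $t-1$ rows, and the same coset-counting redundancy tally giving $\log_q(n/t)+(t-1)\log_q(2(P+1))+t+1$. Your deferral of the $P=\ell+1$ localization accounting to the cited work is consistent with the paper, which states this theorem without proof and relies on \cite{schoenyDNA} for that step.
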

As discussed in Section III, a drawback is the difficulty of enforcing VT syndrome over the signature vectors of the codewords. In this section, we extend the idea of the $q$-ary differential VT codes to construct the {\em $q$-ary differential shifted VT codes}, which are $P$-bounded single deletion correcting codes, but more importantly, they support more efficient encoding and decoding procedures.
\subsection{The Differential Shifted VT Codes}

\begin{construction}[{\bf $q$-ary Differential Shifted VT Codes}]\label{diffshiftVTq}
For $0\leq a< q(P+1)$ and $0\le b\le q$, the $q$-ary differential shifted VT-code ${\rm Diff\_SVT}_{a,b}(n;q,P)$ is defined as:
\begin{align*}
{\rm Diff\_SVT}_{a,b}(n;q,P) \triangleq \Big\{ \bx\in\Sigma_q^n: \text{ if } \by={\rm Diff}(\bx)\in \Sigma_q^n \text{ then } {\rm Syn}(\by) = a \ppmod{q(P+1)} \text{ and } \sum_{i=1}^{n} y_i = b \ppmod{q+1} \Big\}.
\end{align*}
\end{construction}

\begin{lemma}\label{thm:dsvt}
The code ${\rm Diff\_SVT}_{a,b}(n;q,P)$ is a P-bounded single deletion correcting code. 
\end{lemma}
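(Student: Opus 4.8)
The plan is to mirror the decoding argument of Theorem~\ref{mainresult}, but to replace the global modulus $qn$ by the much smaller $q(P+1)$ and to exploit the promise that the deleted coordinate lies in a known window of $P$ consecutive positions. First I would pass to the differential vector: writing $\by=\mathrm{Diff}(\bx)$ and $\by'=\mathrm{Diff}(\bx')$, Lemma~\ref{trans-lemma1} tells us that deleting $x_i$ either deletes $y_1$ (when $i=1$) or replaces the adjacent pair $y_{i-1}y_i$ by the single symbol $y_{i-1}+y_i\ppmod q$ (when $i\ge 2$). Thus a deletion confined to a window of $P$ consecutive positions in $\bx$ becomes a merge confined to a window of comparable width in $\by$, and it suffices to locate that merge and recover the split in order to invert $\mathrm{Diff}$ and reconstruct $\bx$.

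The heart of the argument is to show that the two congruences pin down the \emph{type}, the \emph{location}, and the \emph{value} of the error. I would first read off the type from the parity check. A merge with no carry ($y_{i-1}+y_i\le q-1$) leaves $\sum_j y_j$ unchanged, a merge with a carry ($y_{i-1}+y_i\ge q$) decreases it by exactly $q\equiv-1\ppmod{q+1}$, and a deletion of $y_1$ decreases it by $y_1$. Hence the single quantity $t\triangleq b-\sum_j y'_j\ppmod{q+1}$ splits cleanly: $t=0$ forces a no-carry interior merge, $t=q$ forces a carry interior merge, and $1\le t\le q-1$ forces $i=1$ and moreover reveals $y_1=t$. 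Since these residues exhaust $\bbZ_{q+1}$, the modulus $q+1$ is exactly large enough to separate the three regimes; this is the differential analogue of the binary parity $\sum_i\alpha(x)_i\ppmod 2$ used in Construction~\ref{shiftVTq}.

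With the type known, I would locate $i$ from $\Delta\triangleq a-\mathrm{Syn}(\by')\ppmod{q(P+1)}$, exactly as in the proof of Theorem~\ref{mainresult}: in the no-carry case $\Delta\equiv\sum_{j\ge i}y_j$ and in the carry case $\Delta\equiv (i-1)q+\sum_{j\ge i}y_j$, and both expressions are monotonic in $i$. The key estimate is that, over a window of width $P$, the gap between the values attached to two candidate positions $i<i'$ lies strictly between $0$ and $q(P+1)$: for no carry it is $\sum_{j=i}^{i'-1}y_j\le (P-1)(q-1)$, and for a carry it is $(i'-i)q-\sum_{j=i}^{i'-1}y_j$, which lies in $\bigl(0,(P-1)q\bigr]$. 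Hence distinct genuine candidates stay distinct modulo $q(P+1)$, so $\Delta$ identifies $i$ (with the usual caveat that deleting a symbol inside a run of equal $x_j$, i.e. a zero run in $\by$, yields several admissible $i$ but a single reconstructed codeword), and the recovered tail sum then yields the split value. I expect the main obstacle to be precisely this range bookkeeping: one must verify that the window promise keeps both the no-carry and the carry syndrome gaps inside a single period $q(P+1)$, so that the parity-selected case together with $\Delta$ never admits two different reconstructions. Once injectivity is established within each regime, $\mathrm{Diff}^{-1}$ recovers $\bx$ and the code is $P$-bounded single-deletion-correcting.
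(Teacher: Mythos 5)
Your proof is correct and rests on the same two pillars as the paper's: the sum of the differential symbols modulo $q+1$ to decide whether the merged pair $y_{i-1}y_i$ overflowed past $q$, and the syndrome modulo $q(P+1)$ combined with the width-$P$ window to localize the merge. The packaging, however, is genuinely different. The paper argues by contradiction: it posits two admissible insertion positions $j_1<j_2$ in the window, uses the mod-$(q+1)$ check to force $u_{j_1-1}+u_{j_1}=v_{j_2-1}+v_{j_2}$, strips the common prefix and suffix from the syndrome identity, and arrives at two sequences of length $j_2-j_1+2\le P+1$ lying in a common code ${\rm Diff\_VT}^*_{a}(j_2-j_1+2;q)$ and sharing a deletion descendant, contradicting the corollary on the modified differential VT codes. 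You instead rerun the decoder of Theorem~\ref{mainresult} directly on the window; your observation that $t=b-\sum_j y'_j\ppmod{q+1}$ equals $0$, $q$, or $y_1\in[1,q-1]$ in the three regimes, and that these residues exactly exhaust $\bbZ_{q+1}$, is a sharper articulation of why the modulus $q+1$ suffices than the paper's two-case statement. Your route is more self-contained and directly yields an explicit decoding algorithm; the paper's is shorter because it recycles Theorem~\ref{mainresult} as a black box. One spot to tighten: your gap formulas $\sum_{j=i}^{i'-1}y_j$ and $(i'-i)q-\sum_{j=i}^{i'-1}y_j$ compare two candidate deletion positions inside the \emph{same} original $\by$, which settles only the run-degeneracy scenario; to rule out two genuinely different reconstructions one must compare the syndromes of the two candidate differential vectors $\bu$ and $\bv$, and that difference picks up boundary terms (the split values at the two merge sites), each bounded by $q-1$. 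This enlarges the range of the syndrome difference to at most $(P+1)(q-1)$, which is still strictly less than $q(P+1)$, so it remains inside a single period and your injectivity conclusion survives; but the bookkeeping should be carried out on $\bu$ and $\bv$, as in the paper's displayed computation, rather than on the true $\by$ alone.
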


\begin{proof}
Similar to the proof of Lemma~\ref{sumsymbol}, we have that if ${\rm Syn}({\rm Diff}(\bx)) = a \ppmod{q(P+1)}$ then we also have the parity check property, which consequently gives us the information of the deleted symbol: 
\begin{equation*}
x_1+x_2+\ldots+x_{n-1}+x_n = a \ppmod{q}.
\end{equation*}
Suppose that we receive the sequence $\bx'\in \B(\bx)$ of length $n-1$, the deleted symbol is $\gamma$, which can be determined from the parity check property, and the location of the error is within $L=[i, {i+1},\ldots, {i+P-1}]$. Now, assume that there are at least two locations in $L$ to insert the deleted symbol $\gamma$, i.e we obtain two different sequences $\bx_1$ (by inserting $\gamma$ at index $j_1$) and $\bx_2$ (by inserting $\gamma$ at index $j_2$) for some $i\le j_1 <j_2\le i+P-1$ so that all the code's constraints are satisfied, i.e.
\begin{align*}
\bx_1 &= ({\color{blue}{x_1', \ldots,  x_{j_1-1}'}}, \gamma, x_{j_1}', \ldots x_{j_2-1}', {\color{blue}{x_{j_2}', \ldots, x_{n-1}'}}), \text{ and} \\
\bx_2 &= ({\color{blue}{x_1', \ldots,  x_{j_1-1}'}}, x_{j_1}', \ldots x_{j_2-1}', \gamma, {\color{blue}{x_{j_2}', \ldots, x_{n-1}'}}).  
\end{align*} 

We now consider two cases. 

\noindent{\em Case 1.} If $j_1>1$. According to Lemma~\ref{trans-lemma1}, a deletion at symbol $x_j$ replaces $y_{j-1} y_{j}$ with $y_{j-1} + y_j \ppmod{q}$. From the information of $\by'={\rm Diff}(\bx')$, we can verify if $y_{j-1} + y_j\le q-1$ or $y_{j-1} + y_j\ge q$ as follows:
\begin{itemize}
\item If $y_{j-1} + y_j\le q-1$ then when $y_{j-1} y_{j}$ is replaced by $y_{j-1} + y_j \ppmod{q}$, there is no change in the sum of symbols in the differential vector. In other words, we must have 
\begin{equation}
\sum_{h=1}^{n-1} y'_h = b \ppmod{q+1}.
\end{equation}
\item On the other hand, if $y_{j-1} + y_j\ge q$ we observe that $y_{j-1} y_{j}$ is replaced by the new symbol $y_{j-1} + y_j - q$, and hence, 
\begin{equation}
\sum_{h=1}^{n-1} y'_h = b-q \ppmod{q+1}.
\end{equation}
\end{itemize}
Let $\bu={\rm Diff}(\bx_1)$ and $\bv={\rm Diff}(\bx_2)$. From (1) and (2) we must have $u_{j_1-1}+u_{j_1}=v_{j_2-1}+v_{j_2}$. On the other hand, we have $u_j=v_j$ for all $j\in [n] \setminus \{j_1-1,j_1,j_2-1,j_2\}$. Since ${\rm Syn}(\bu) = {\rm Syn}(\bv) \ppmod{q(P+1)}$, we have
\begin{align*}
{\color{blue}{\sum_{j=1}^{j_1-2} ju_j}} + \sum_{j=j_1-1}^{j_2} ju_j + {\color{red}{\sum_{j=j_2+1}^{n} ju_j}} &= {\color{blue}{\sum_{j=1}^{j_1-2} jv_j}} + \sum_{j=j_1-1}^{j_2} jv_j + {\color{red}{\sum_{j=j_2+1}^{n} jv_j}} \ppmod{q(P+1)},\text{ or} \\
 \sum_{j=j_1-1}^{j_2} ju_j  &=  \sum_{j=j_1-1}^{j_2} jv_j  \ppmod{q(P+1)},\text{ or} \\
{\color{teal}{(j_1-2)\Big( \sum_{j=j_1-1}^{j_2} u_j \Big)}}+ \sum_{j=1}^{j_2-j_1+2} ju_{j+j_1-2}  &=   {\color{teal}{(j_1-2)\Big( \sum_{j=j_1-1}^{j_2} v_j \Big)}}+ \sum_{j=1}^{j_2-j_1+2} jv_{j+j_1-2}  \ppmod{q(P+1)},\text{ or} \\
 \sum_{j=1}^{j_2-j_1+2} ju_{j+j_1-2}  &=  \sum_{j=1}^{j_2-j_1+2} jv_{j+j_1-2}  \ppmod{q(P+1)}.  
\end{align*} 
Thus, we obtain two sequences $\bx_3,\bx_4$ such that ${\rm Syn}({\rm Diff}(\bx_3)) = {\rm Syn}({\rm Diff}(\bx_4)) \ppmod{q(P+1)}$, where
\begin{align*}
\bx_3 &= (x_{j_1-1}', \gamma, x_{j_1}', \ldots x_{j_2-1}', x_{j_2}'), \text{ and} \\
\bx_4 &= (x_{j_1-1}',x_{j_1}',\ldots x_{j_2-1}', \gamma, x_{j_2}').  
\end{align*} 
Note that the length of $\bx_3$ and $\bx_4$ is $j_2-j_1+2 \le (i+P-1)-i+2 = P+1$, and hence we conclude that $\bx_3,\bx_4 \in {\rm Diff\_VT}^*_a(j_2-j_1+2,q)$ for some $0\le a< q(P+1)$ and such a code can correct a single deletion. On the other hand, we observe that $\bx''=(x_{j_1-1}', x_{j_1}', \ldots x_{j_2-1}', x_{j_2}')$ can be obtained from both $\bx_3$ and $\bx_4$ by deleting the symbol $\gamma$. We have a contradiction. It remains to consider the case when $j_1=1$. 

\noindent{\em Case 2.} If $j_1=1$ and $j_2 \le P$, i.e. $i=1$ and $L=[1,2,\ldots P]$.  Again, from (1) and (2), if $j_1=1$, we must have $y_{j_2-1}+y_{j_2} \le q-1$. 
Similarly, we obtain two sequences $\bx_3',\bx_4'$ of length at most $(P+1)$ such that ${\rm Syn}({\rm Diff}(\bx_3')) = {\rm Syn}({\rm Diff}(\bx_4')) \ppmod{q(P+1)}$, where
\begin{align*}
\bx_3' &= (\gamma, x_{1}', \ldots x_{j_2-1}', x_{j_2}'), \text{ and} \\
\bx_4' &= (x_1',\ldots x_{j_2-1}', \gamma, x_{j_2}').  
\end{align*} 
We have a contradiction. We conclude that there is at most one location to insert the deleted symbol $\gamma$ into $\bx'$, and thus, the constructed $q$-ary differential VT code ${\rm Diff\_SVT}_{a,b}(n;q,P)$ is a $P$-bounded single deletion correcting code. 
\end{proof}

\begin{remark}
We observe that our designed differential shifted VT codes ${\rm Diff\_SVT}_{a,b}(n;q,P)$ incur at most one more redundant symbol as compared to the $q$-ary shifted VT codes, proposed by Schoeny \et{} \cite{schoenyDNA} (refer to Construction~\ref{shiftVTq}). Particularly, the redundancy of a $q$-ary shifted VT code is $\log_q (P+1)+1+\log_q 2$ symbols while the redundancy of a differential shifted VT code in Construction~\ref{diffshiftVTq} is $\log_q (P+1)+1+\log_q (q+1)$ symbols. On the other hand, it provides an alternative simpler, and more efficient encoder (with the improvement of at least two redundant symbols as presented in Section III). 
\end{remark}



For completeness, we present an efficient encoder for the differential shifted VT codes ${\rm Diff\_SVT}_{a,b}(n;q,P)$, given arbitrary code parameters. Note that, in general, the value of $P$ is $\log_q n+O(1) =o(n)$. Given $q,n$, $0\leq a< q(P+1)$ and $0\le b\le q$. Set $m\triangleq\ceil{\log_q q(P+1)}$ and $k\triangleq n-m-2$. The message is of length $k$, and hence, the redundancy of our encoder is then $m+2= \ceil{{\log_q q(P+1)}}+2\approx \ceil{\log_q P}+3$. 
\vspace{2mm}

\noindent{\bf Differential SVT-Encoder} $\enc_{{\rm Diff\_SVT}}$.  
\vspace{1mm}

{\sc Input}: $n,q,P$ and $a\in \bbZ_{q(P+1)}$, a sequence $\bx\in \Sigma_q^k$, where $k\triangleq n-\ceil{\log_q q(P+1)}-2$\\
{\sc Output}: $\by \triangleq \enc_{{\rm Diff\_SVT}}(\bx)\in {\rm Diff\_SVT}_{a,b}(n;q,P)$\\[-2mm]
\begin{enumerate}[(I)]
\item Set index $i_0=2q(P+1)$ and $i_1=3q(P+1)$ where $i_0,i_1\le n$. Set $S \triangleq \{q^{j-1} : j \in [m]\} \cup \{i_0,i_1\}$ and $I \triangleq [n]\setminus S$.
\item Consider $\bc'\in \Sigma_q^n$, where $\bc'|_I=\bx$ and $\bc'|_S=0$. 
Compute the difference $a'\triangleq a-{\rm Syn}(\bc') \ppmod{q(P+1)}$. 
In the next step, we modify $\bc'$ to obtain a codeword $\bc$ with ${\rm Syn}(\bc)=a \ppmod{q(P+1)}$.
\item Let $z_{m}\ldots z_1z_0$ be the $q$-ary representation of $a'$ (since any number less than $q(P+1)$ has a representation of length at most $\ceil{\log_q q(P+1)}$).
In other words, $a' = \sum_{i=0}^{m} z_i q^i$. 
Then we set $c_{q^{j-1}}=y_{j-1}$ for $j\in [m]$.
\item Next, we set the symbols at the index $i_0=2q(P+1)$ and $i_1=3q(P+1)$ so that 
\begin{equation*}
c_{i_0}+c_{i_1}=b-\sum_{i\in [n]\setminus {\{i_0,i_1\}}} c_i\ppmod{q+1}.
\end{equation*}
\item Finally, we output $\by={\rm Diff}^{-1}(\bc)$.
\end{enumerate} 

\begin{theorem}\label{encoderSVT} The encoder $\enc_{{\rm Diff\_SVT}}$ is correct and has redundancy $\ceil{\log_q q(P+1)}+2$ symbols. 
In other words, $\enc_{{\rm Diff\_SVT}}(\bx)\in {\rm Diff\_SVT}_{a,b}({n;q,P})$ for all $\bx\in\Sigma_q^{k}$, where $k=n-\ceil{\log_q q(P+1)}-2$.
\end{theorem}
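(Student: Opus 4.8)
The plan is to treat the redundancy and the two membership constraints separately. The redundancy count is immediate: the input $\bx$ has length $k = n - \ceil{\log_q q(P+1)} - 2$ while the output has length $n$, so the encoder spends exactly $n - k = \ceil{\log_q q(P+1)} + 2$ redundant symbols. For correctness, since the final output is $\by = {\rm Diff}^{-1}(\bc)$ we have $\bc = {\rm Diff}(\by)$, so it suffices to verify that the word $\bc$ assembled in Steps (II)--(IV) lies in the two congruence classes defining ${\rm Diff\_SVT}_{a,b}(n;q,P)$, namely ${\rm Syn}(\bc) \equiv a \ppmod{q(P+1)}$ and $\sum_{i=1}^n c_i \equiv b \ppmod{q+1}$.

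First I would handle the syndrome. The key observation to record is that the two parity positions $i_0 = 2q(P+1)$ and $i_1 = 3q(P+1)$ are multiples of $q(P+1)$, so $i_0 c_{i_0} \equiv i_1 c_{i_1} \equiv 0 \ppmod{q(P+1)}$ regardless of the symbol values placed there; hence Step (IV) never perturbs ${\rm Syn}(\bc)$ modulo $q(P+1)$. It then remains to account for Step (III). Starting from $\bc'$ with $\bc'|_S = 0$ we have $a' = a - {\rm Syn}(\bc') \ppmod{q(P+1)}$, and since $a' < q(P+1) \le q^m$ its $q$-ary digits occupy only indices $0,\ldots,m-1$; placing digit $z_{j-1}$ at position $q^{j-1}$ therefore adds exactly $\sum_{j=1}^m q^{j-1} z_{j-1} = a'$ to the syndrome, giving ${\rm Syn}(\bc) \equiv {\rm Syn}(\bc') + a' \equiv a \ppmod{q(P+1)}$.

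Next I would verify the parity-check constraint, where the only non-routine point is that Step (IV) is always feasible inside the alphabet $\Sigma_q$. Setting $r = b - \sum_{i \in [n]\setminus\{i_0,i_1\}} c_i \ppmod{q+1}$, I need two symbols $c_{i_0}, c_{i_1} \in \{0,\ldots,q-1\}$ with $c_{i_0} + c_{i_1} \equiv r \ppmod{q+1}$. Because the realizable sums cover $\{0,1,\ldots,2q-2\}$ and $2q-2 \ge q$ for all $q \ge 2$, every residue class modulo $q+1$ is attainable (the residue $q$ being obtained, for instance, by a sum equal to $q$). This forces $\sum_{i=1}^n c_i \equiv b \ppmod{q+1}$ and completes the membership argument.

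I expect the main obstacle to be the possible interference between the two fixing stages: the digits inserted in Step (III) to pin down the syndrome must survive the parity adjustment in Step (IV). The resolution is exactly the choice of $i_0, i_1$ at multiples of $q(P+1)$ (which annihilates their syndrome contribution) together with the fact that the index set $S$ splits into the two disjoint blocks $\{q^{j-1} : j \in [m]\}$ and $\{i_0, i_1\}$, so neither stage overwrites the other's positions. Once this decoupling is made explicit, both congruences reduce to the direct modular computations above.
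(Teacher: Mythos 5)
Your proposal is correct and follows essentially the same route as the paper's proof: decompose ${\rm Syn}(\bc)$ over the disjoint index sets, observe that the positions $i_0,i_1$ are multiples of $q(P+1)$ and hence contribute nothing to the syndrome, and let the $q$-ary digits of $a'$ placed at the powers of $q$ supply exactly the missing amount. The one thing you add beyond the paper is the explicit feasibility check that two symbols of $\Sigma_q$ can realize any residue modulo $q+1$ (since the attainable sums $\{0,\dots,2q-2\}$ cover a full residue system for $q\ge 2$); the paper leaves this implicit, noting only in a remark that a single symbol would not suffice, so this is a worthwhile clarification rather than a divergence.
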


\begin{proof}
Suppose that $\by \triangleq \enc_{{\rm Diff\_SVT}}(\bx)$ for some $\bx\in\Sigma_q^k$. 
It suffices to show that 
\begin{equation*}
{\rm Syn}({\rm Diff}(\by))=a \ppmod{q(P+1)} \text{ and } \sum_{i=1}^{n} {\rm Diff}(\by)_i = b \ppmod{q+1}.
\end{equation*} 

From Step (V), we have $\by={\rm Diff}^{-1}(\bc)$. In other words, $\bc={\rm Diff}(\by)$, and it remains to show that ${\rm Syn}(\bc)=a \ppmod{q(P+1)}$. 
Recall that from Step (I), $S \triangleq \{q^{j-1} : j \in [m]\} \cup \{i_0=2q(P+1),i_1=3q(P+1)\}$ and $I \triangleq [n]\setminus S$. Therefore,  
\begin{align*}
{\rm Syn}(\bc) &= \sum_{j\in S}  jc_j+ \sum_{j\in I} jc_j \ppmod{q(P+1)}\\
&= \sum_{j\in [m]}  q^{j-1} c_j+ 2q(P+1) c_{2q(P+1)} + 3q(P+1) c_{3q(P+1)}+ \sum_{j\in I} jc_j \ppmod{q(P+1)}\\
&= a'+ 0 + 0+(a-a') \ppmod{q(P+1)}\\
&= a \ppmod{q(P+1)}. 
\end{align*} 
In addition, from Step (IV), we have $c_{i_0}+c_{i_1}=b-\sum_{i\in [n]\setminus {\{i_0,i_1\}}} c_i\ppmod{q+1}$, and hence, it implies that $\sum_{i=1}^{n} c_i = b \ppmod{q+1}$ or $\sum_{i=1}^{n} {\rm Diff}(\by)_i = b \ppmod{q+1}.$
\end{proof}

\begin{remark} We observe that reserving only one redundant symbol for the parity check constraint is not sufficient since the constraint is over modulo $(q+1)$. Similar to the construction of the differential VT decoder $\dec_{\rm Diff\_VT}$, one can easily obtain a corresponding differential shifted VT decoder $\dec_{\rm Diff\_SVT}$. We skip the detailed construction of such a decoder. 
\end{remark}

\subsection{Codes Correcting a Burst of $t$ Deletions with Efficient Encoder} 

We now present a construction of non-binary codes correcting a burst of $t$ deletions, and the coding method is based on the differential VT codes and the differential shifted VT codes as presented in earlier sections. Recall that we represent the codewords of length $n$ as $t \times n/t$ codeword arrays, where $t$ divides $n$. Given $\ell\ge 1$, a code $\C$ is called $\ell$-runlength limited if the maximum run of identical symbols in every codeword in $\C$ is at most $\ell$. 

\begin{construction}[{\bf $q$-ary $t$-burst-deletion correcting codes}]\label{t-burst-code}
Given $n,q$. Set $\ell=\ceil{\log_q n/t}+3, P=\ell+1$. For $0\le a_1 <q(n/t)$, $0\leq a_2< q(P+1)$ and $0\le b\le q$, let $\C_{a_1,a_2,b}(n,t;q)$ be a set:
\begin{align*}
\C_{a_1,a_2,b}(n,t;q)= \{\bx\in \Sigma_q^n:  &\text{ The first row: } A_t(\bx)_{\bf {\rm 1}} \in {\rm Diff\_VT}_{a_1}(n/t;q) \text{ and } A_t(\bx)_{\bf {\rm 1}} \text{ is $\ell$-runlength limited,} \\
&\text{ The other rows: }  A_t(\bx)_{\bf {\rm i}} \in {\rm Diff\_SVT}_{a_2,b}(n/t;q,P) \text{ for } 2\le i\le t\}.
\end{align*}
\end{construction}

\begin{theorem}\label{thm:t-burst-qary}
The code $\C_{a_1,a_2,b}(n,t;q)$ from Construction~\ref{t-burst-code} can correct a burst of $t$ deletions, and the redundancy is 
\begin{equation*}
\log_q (n/t) + (t-1)\log_q \log_q (n/t)+ O(t) \text{ (symbols)}.
\end{equation*}
In terms of bits, the redundancy is at most $\log n+(t-1) \log \log n+ O(t\log q)$ bits.  
\end{theorem}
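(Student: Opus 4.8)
The plan is to follow the two-layer array strategy of Schoeny \et{} \cite{schoenyDNA}, but with their signature-based VT and shifted VT ingredients replaced by our differential versions. First I would record the combinatorial fact underlying the whole scheme: a burst of exactly $t$ deletions in $\bx$ removes precisely one symbol from each row of the array $A_t(\bx)$. Concretely, if the deleted block starts at index $(c-1)t+r$ with $1\le r\le t$, then rows $r,r+1,\ldots,t$ lose their entry in column $c$ while rows $1,2,\ldots,r-1$ lose their entry in column $c+1$. Hence every row suffers exactly a single deletion, and the deletion columns across different rows differ by at most one.

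Next, decode the first row. Since $A_t(\bx)_{\mathbf 1}\in{\rm Diff\_VT}_{a_1}(n/t;q)$, Theorem~\ref{mainresult} lets the decoder recover the first row exactly in linear time. The crucial \emph{localization} step is then to bound the uncertainty in the column index of the first-row deletion. After recovery, the only residual ambiguity is that the deleted symbol may lie anywhere inside a run of identical symbols; because $A_t(\bx)_{\mathbf 1}$ is $\ell$-runlength-limited (enforced via the encoder of Lemma~\ref{RLLencoder}), this ambiguity spans at most $\ell$ columns. Combining this with the at-most-one column shift from the first step, the deletion in each remaining row is confined to a window of at most $\ell+1=P$ consecutive positions.

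With the window in hand, decode the remaining rows. Each row $A_t(\bx)_{\mathbf i}$, for $2\le i\le t$, lies in ${\rm Diff\_SVT}_{a_2,b}(n/t;q,P)$, which by Lemma~\ref{thm:dsvt} is a $P$-bounded single-deletion-correcting code; hence each such row is corrected from the localized window, and reassembling the corrected columns recovers $\bx$. Finally I would tally the redundancy by counting the defining constraints. The first row imposes the differential VT syndrome, costing $\log_q(n/t)+1$ symbols, plus one further symbol for the runlength constraint (Lemma~\ref{RLLencoder}), i.e. $\log_q(n/t)+O(1)$. Each of the $t-1$ remaining rows imposes a syndrome modulo $q(P+1)$ together with a sum modulo $(q+1)$, costing $\log_q(P+1)+O(1)=\log_q\log_q(n/t)+O(1)$ symbols (its linear-time realization being Theorem~\ref{encoderSVT}). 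Summing yields $\log_q(n/t)+(t-1)\log_q\log_q(n/t)+O(t)$ symbols; multiplying by $\log q$ and absorbing the $\log t$ and $\log\log q$ lower-order terms gives $\log n+(t-1)\log\log n+O(t\log q)$ bits.

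The main obstacle is the localization step: one must argue carefully that the runlength bound on the first row, together with the geometry of how a single burst splits across the array, genuinely pins the error in every other row to within $P$ consecutive positions. This is the structural heart of the construction; once it is in place, the remainder is bookkeeping built on the already-established Theorem~\ref{mainresult} and Lemma~\ref{thm:dsvt}.
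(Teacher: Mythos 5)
Your proposal is correct and follows essentially the same route as the paper's proof: recover the first row via the differential VT code, use the $\ell$-runlength-limited property (plus the at-most-one-column shift of the burst across rows) to localize the deletion in each remaining row to a window of $P=\ell+1$ positions, then invoke the $P$-bounded correction of the differential shifted VT codes, with the identical redundancy tally. The only difference is that you spell out the localization geometry explicitly, whereas the paper defers it to the argument of Schoeny et al.; this is a welcome elaboration, not a deviation.
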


\begin{proof}
The error-decoding procedure is similar to the construction of Schoeny \et{} \cite{schoenyDNA}. Since the first row belongs to a differential VT code, the decoder can recover the first row $A_t(\bx)_1$. In addition, since the maximum run of identical symbols in the first row is at most $\ell$, we can locate the error in each of the other rows to be within at most $P=\ell+1$ positions. Furthermore, since each of the remaining rows belongs to a differential shifted VT code, the decoder can recover each row accordingly. 

It remains to compute the redundancy of our constructed code. The redundancy used for the first row in the array $A_t(\bx)$ is 
\begin{equation*}
r_1= \log_q (qn/t) + 1 = \log_q (n/t) + 2 \text{ (symbols)}. 
\end{equation*}
Here, $\log_q (qn/t)$ symbols are used to encode the differential VT code while one additional symbol is to enforce the runlength-limited constraint. On the other hand, the redundancy used for each of the other $(t-1)$ rows in the array $A_t(\bx)$ is
\begin{equation*}
r_i= \log_q (q(P+1)) + \log_q (q+1) = \log_q (\ceil{\log_q n/t}+5) + 1 + \log_q (q+1) = \log_q \log_q (n/t) + O(1) \text{ (symbols) for } 2\le i\le t. 
\end{equation*}
Thus, the total redundancy is $\log_q (n/t) + (t-1)\log_q \log_q (n/t)+ O(t)$ symbols or $\log n+(t-1) \log \log n+ O(t\log q)$ bits.
\end{proof}

To conclude this subsection, we provide a linear-time encoder for such a code $\C_{a_1,a_2,b}(n,t;q)$ with given arbitrary code parameters. Observe that for $2\le i\le t$, the $i$th row $A_t(\bx)_i$ can be encoded/decoded independently by using the differential SVT-Encoder $\enc_{\rm Diff\_SVT}$, since there is no joint constraint among these rows. The redundancy to encode each of these rows is then $\ceil{\log_q q(P+1)}+2 \approx \ceil{\log_q P}+3$ for any $P$. On the other hand, to encode the first row $A_t(\bx)_1$, we need to enforce the runlength-limited constraint with the differential VT syndrome property. Recall that the encoder $\enc_{\rm Diff\_VT}$ for a differential VT code of length $n$ (as presented in Section III-B) uses only $\ceil{\log_q n}+1$ redundant symbols. 


\begin{lemma}\label{newRLL}
Given $n,q$. Set $\ell'=\ceil{\log_q n}+3$, and $k=n-\ceil{\log_q n}-2$. For an arbitrary sequence $\bx \in \Sigma_q^k$, suppose that $\by=\enc_{\ell'\_{\rm RLL}}(\bx) \in \Sigma_q^{k+1}$ and $\bc=\enc_{\rm Diff\_VT}(\by) \in {\rm Diff\_VT}_a(n;q)$. We then have the maximum run of identical symbols in $\bc$ is at most $\ell=2\ceil{\log_q n}+5$. 
\end{lemma}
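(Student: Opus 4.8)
The plan is to translate the statement about runs of $\bc$ into a statement about runs of zeros in its differential vector, and then to track how $\enc_{\rm Diff\_VT}$ scatters the (already run-bounded) input $\by$ across the coordinates of that differential vector. Throughout, write $\bu = {\rm Diff}(\bc)$; note that $\by \in \Sigma_q^{k+1}$ has length $k+1 = n-\ceil{\log_q n}-1$, which is exactly the admissible input length for $\enc_{\rm Diff\_VT}$.

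First I would record the elementary observation underlying the construction: since $u_i = c_i - c_{i+1} \ppmod{q}$ for $i\in[n-1]$, we have $c_i = c_{i+1}$ if and only if $u_i = 0$. Hence a run of identical symbols of length $r$ in $\bc$ corresponds exactly to a block of $r-1$ consecutive zeros in $u_1u_2\cdots u_{n-1}$, and conversely. So the maximum run in $\bc$ is $1$ plus the length of the longest run of zeros in $\bu$ restricted to $[1,n-1]$, and it suffices to bound the latter by $2\ceil{\log_q n}+4$.

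Next I would unpack the encoder. By Steps (I)--(V) of $\enc_{\rm Diff\_VT}$, the vector $\bu={\rm Diff}(\bc)$ satisfies $\bu|_I = \by$, while $\bu|_S$ is chosen only to meet the syndrome, where $S=\{q^{j-1}:j\in[m]\}\cup\{n\}$ with $m=\ceil{\log_q n}$ and $I=[n]\setminus S$; in particular $|S| = m+1 = \ceil{\log_q n}+1$. By Lemma~\ref{RLLencoder}, the input $\by=\enc_{\ell'\_{\rm RLL}}(\bx)$ has maximum run at most $\ell'=\ceil{\log_q n}+3$, so in particular no run of zeros in $\by$ is longer than $\ell'$. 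The one combinatorial fact I would isolate and verify carefully is that, because $I$ is the complement of $S$, the order-preserving identification $y_t = u_{i_t}$ (with $I=\{i_1<i_2<\cdots\}$) sends the positions of $I\cap[a,b]$, for any interval $[a,b]$, onto a block of consecutive indices of $\by$.

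Finally I would combine these. Take a maximal run of zeros in $\bu$ occupying an interval $[a,b]\subseteq[1,n-1]$. Every coordinate of $I\cap[a,b]$ equals $0$, and by the previous step these coordinates form a run of zeros inside $\by$, so $|I\cap[a,b]|\le\ell'$; trivially $|S\cap[a,b]|\le|S|=m+1$. Adding the two contributions gives $b-a+1\le\ell'+(m+1)=2\ceil{\log_q n}+4$, and therefore the longest run of identical symbols in $\bc$ is at most $2\ceil{\log_q n}+5=\ell$. The only genuine subtlety --- and the step I expect to require the most care --- is that the coordinates of $\bu$ sitting in $S$ may themselves be zero and could ``bridge'' two separate runs coming from $\by$ into one longer run; this is exactly why I bound the $S$-contribution crudely by $|S|$ rather than trying to argue that the $S$-positions break the zeros up, and it is what produces the extra additive $\ceil{\log_q n}+1$ in the final bound.
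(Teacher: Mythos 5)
Your proof is correct and follows essentially the same route as the paper: both arguments identify a run of length $r$ in $\bc$ with $r-1$ consecutive zeros in ${\rm Diff}(\bc)$, observe that at most $|S|=\ceil{\log_q n}+1$ of those zero positions can be the encoder's redundant coordinates, and conclude that the remaining positions form a consecutive run of zeros in the $\ell'$-runlength-limited input $\by$, giving the bound $\ell'+|S|+1=2\ceil{\log_q n}+5$. The only difference is presentational --- the paper argues by contradiction while you argue directly, and you make explicit the order-preserving identification of $I\cap[a,b]$ with a block of consecutive indices of $\by$, which the paper leaves implicit.
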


\begin{proof}
Note that the differential VT encoder $\enc_{\rm Diff\_VT}$ of a code of length $n$ uses only $\ceil{\log_q n}+1$ redundant symbols at predetermined positions. Therefore, if the maximum run of identical symbols in $\bc=\enc_{\rm Diff\_VT}(\by)$ is at least $\ell+1=2\ceil{\log_q n}+6$, in other words, ${\rm Diff}(\bc)$ has at least $2\ceil{\log_q n}+5$ consecutive zeros (by definition of a differential vector), then the sequence $\by$ (before inserting $\ceil{\log_q n}+1$ redundant symbols) has a run of at least  $2\ceil{\log_q n}+5-{\ceil{\log_q n}}-1=\ceil{\log_q n}+4$ zeros. We have a contradiction since $\by$ is $\ell'$-runlength limited. 
\end{proof}

According to Lemma~\ref{newRLL}, to construct a $t$-burst encoder, we can set the value of $P$ to be $P=\ell+1=2\ceil{\log_q n}+6$, and amend the differential shifted VT code in the last $(t-1)$ rows and the corresponding encoder for such codes. For completeness, we present the detailed construction of a $t$-burst encoder as follows. 
\vspace{1mm}

\noindent{\bf Input.} Given $q,n$, $\ell'=\ceil{\log_q n}+3$, $\ell=2\ceil{\log_q n}+5$, $P=\ell+1=2\ceil{\log_q n}+6$, $0\le a_1 <q(n/t)$, $0\leq a_2< q(P+1)$ and $0\le b\le q$. The message $\bx\in\Sigma_q^k$ is of length 
\begin{equation*}
k=\Big( \underbrace{n/t-\ceil{\log_q n/t}-2}_{\text{ first row encoding }} \Big)+(t-1) \Big( \underbrace{n/t-\ceil{\log_q q(P+1)}-2}_{i\text{th row encoding, } 2\le i\le t} \Big) = n-\ceil{\log_q n/t}- (t-1) (\ceil{\log_q q(P+1)}+2)-2.
\end{equation*}

We observe that for $P=2\ceil{\log_q n}+6$, the total redundancy is then $\log_q (n/t) + (t-1)\log_q \log_q (n/t)+ O(t)$ symbols or $\log n+(t-1) \log \log n+ O(t\log q)$ bits.
\vspace{1mm}

\noindent{\bf $t$-Burst-Encoder} $\enc_{t\_{\rm burst}}$.  
\vspace{1mm}

{\sc Input}: Given $n,q$, and a sequence $\bx\in \Sigma_q^k$, where $k$ is defined above\\
{\sc Output}: $\by \triangleq \enc_{t\_{\rm burst}}(\bx)\in \C_{a_1,a_2,b}(n,t;q)$\\[-2mm]
\begin{enumerate}[(I)]
\item Suppose that $\bx=\bx_1\bx_2\ldots \bx_t$, where $\bx_1$ is the first $(n/t-\ceil{\log_q n/t}-2)$ symbols in $\bx$, and for $2\le i\le t$, $\bx_i$ is of length exactly $n/t-\ceil{\log_q q(P+1)}-2$. Set $k_1=n/t-\ceil{\log_q n/t}-2$ and $k_2=n/t-\ceil{\log_q q(P+1)}-2$. 
\vspace{1mm}

\item {\bf Encoding the first row in $A_t(\bx)$:} 
\begin{itemize}
\item Obtain $\bx_1'=\enc_{\ell'\_{\rm RLL}}(\bx_1) \in \Sigma_q^{k_1+1}$
\item Obtain $\by_1=\enc_{\rm Diff\_VT}(\bx_1') \in {\rm Diff\_VT}_{a_1}(n/t;q)$
\end{itemize}

\item {\bf Encoding the $i$th row in $A_t(\bx)$:} for $2\le i\le t$, we use the differential shifted VT encoder to obtain 
\begin{equation*}
\by_i =\enc_{\rm Diff\_SVT}(\bx_i) \in {\rm Diff\_SVT}_{a_2,b}(n/t;q,P).
\end{equation*} 

\item Finally, we output $\bc=\by_1||\by_2 || \ldots || \by_t$ (the interleaved sequence of $\by_1,\by_2, \ldots, \by_t$).
\end{enumerate} 

The following result is then immediate. 

\begin{theorem}
The encoder $\enc_{t\_{\rm burst}}$ is correct. In other words, the output codewords belong to $\C_{a_1,a_2,b}(n,t;q)$ that is capable of correcting a burst of $t$ deletions. The redundancy of the encoder is $\log_q (n/t) + (t-1)\log_q \log_q (n/t)+ O(t)$ symbols or $\log n+(t-1) \log \log n+ O(t\log q)$ bits. 
\end{theorem}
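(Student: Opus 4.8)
The plan is to verify that the output word $\bc=\enc_{t\_{\rm burst}}(\bx)$ satisfies the two defining constraints of $\C_{a_1,a_2,b}(n,t;q)$ from Construction~\ref{t-burst-code}, and then to read off the redundancy from the lengths of the component messages. The pivotal structural remark is that the interleaving in Step~(IV) inverts the array map: since $\bc=\by_1||\by_2||\cdots||\by_t$, the symbol in position $(j-1)t+i$ of $\bc$ is the $j$th symbol of $\by_i$, so the $i$th row of the codeword array satisfies $A_t(\bc)_i=\by_i$ for every $i\in[t]$. Consequently, checking $\bc\in\C_{a_1,a_2,b}(n,t;q)$ decouples into checking the prescribed constraint on each $\by_i$ in isolation.

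For the rows $2\le i\le t$ this is immediate: each $\by_i=\enc_{\rm Diff\_SVT}(\bx_i)$, so Theorem~\ref{encoderSVT} places $\by_i\in{\rm Diff\_SVT}_{a_2,b}(n/t;q,P)$. For the first row I would combine two facts. Theorem~\ref{main-encoder} (applied with length $n/t$) gives $\by_1=\enc_{\rm Diff\_VT}(\bx_1')\in{\rm Diff\_VT}_{a_1}(n/t;q)$, so the syndrome constraint holds; and the runlength constraint is exactly the content of Lemma~\ref{newRLL}. I expect this runlength step to be the main point requiring care: the differential VT encoder inserts its $\ceil{\log_q(n/t)}+1$ parity symbols at fixed positions and can \emph{lengthen} runs, so feeding in the $\ell'$-runlength-limited word $\bx_1'=\enc_{\ell'\_{\rm RLL}}(\bx_1)$ with $\ell'=\ceil{\log_q(n/t)}+3$ only guarantees that $\by_1$ has maximum run at most $\ell=2\ceil{\log_q(n/t)}+5$, rather than the smaller $\ell$ of Construction~\ref{t-burst-code}. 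This is precisely why the encoder is set up with the enlarged $\ell$ and $P=\ell+1$.

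I would then note that this enlargement is harmless for the correction guarantee. The decoding argument behind Theorem~\ref{thm:t-burst-qary} uses only that (i) the first row lies in a differential VT code and is $\ell$-runlength limited, and (ii) the remaining rows are $P$-bounded single-deletion-correcting codes with $P\ge\ell+1$; both hold verbatim here, so the first row is recovered, each other row's deletion is localized to $\ell+1=P$ consecutive positions, and Lemma~\ref{thm:dsvt} corrects it. Hence the output lies in a code that still corrects a burst of $t$ deletions.

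Finally, the redundancy is a routine count. The first row costs one symbol for $\enc_{\ell'\_{\rm RLL}}$ plus $\ceil{\log_q(n/t)}+1$ for $\enc_{\rm Diff\_VT}$, i.e. $\ceil{\log_q(n/t)}+2$ symbols; each of the other $t-1$ rows costs $\ceil{\log_q q(P+1)}+2$ symbols by Theorem~\ref{encoderSVT}. Since $P=2\ceil{\log_q(n/t)}+6$ gives $\log_q\big(q(P+1)\big)=\log_q\log_q(n/t)+O(1)$, summing yields total redundancy $\log_q(n/t)+(t-1)\log_q\log_q(n/t)+O(t)$ symbols, equivalently $\log n+(t-1)\log\log n+O(t\log q)$ bits, matching the claim; this also confirms that $n-k$ equals the message length stipulated in the encoder's input.
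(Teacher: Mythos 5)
Your proposal is correct and follows essentially the same route as the paper, which states this theorem as ``immediate'' after establishing Lemma~\ref{newRLL}, the enlarged parameters $\ell=2\ceil{\log_q (n/t)}+5$ and $P=\ell+1$, and the redundancy count in the encoder's input description; you simply spell out the row-by-row verification ($A_t(\bc)_i=\by_i$ via the interleaving, Theorem~\ref{main-encoder} plus Lemma~\ref{newRLL} for the first row, Theorem~\ref{encoderSVT} for the rest) that the paper leaves implicit. Your observation that the encoder only achieves the larger runlength bound $\ell$ rather than the $\ceil{\log_q (n/t)}+3$ of Construction~\ref{t-burst-code}, and that this is harmless because the decoding argument of Theorem~\ref{thm:t-burst-qary} only needs the SVT rows to use $P\ge \ell+1$, is exactly the adjustment the paper signals with its remark about ``amending'' the differential shifted VT codes in the last $t-1$ rows.
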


\section{Correcting a Burst of Variable Length} 


In this section, we focus on the case $t=2$, i.e. when there are at most two deletions. We first review the coding method of Wang \et{} \cite{2del:2}. To correct a burst of at most two deletions, the authors represent the codewords of length $n$ as a ${\ceil{\log q}} \times n$ codeword arrays, where each symbol in $\Sigma_q$ is converted to its binary representation of length ${\ceil{\log q}}$. For a sequence $\bu\in \Sigma_q^n$, 
\begin{small}
\begin{equation*}
A(\bu)=
\left[
  \begin{array}{c}
    \bx_1 \\
    \bx_2 \\
    \vdots \\
      \bx_{\ceil{\log q}}\\
  \end{array}
\right]
=
\left[
  \begin{array}{cccc}
    x_{1,1} & x_{1,2} & \cdots  & x_{1,n}\\
    x_{2,1} & x_{2,2} & \cdots & x_{2,n} \\
    \vdots & \vdots & \ddots & \vdots \\
      x_{\ceil{\log q},1} & x_{\ceil{\log q},2} & \cdots  & x_{\ceil{\log q},n}\\
  \end{array}
\right].
\end{equation*}
\end{small}
Therefore, the $q$-ary sequence $\bu$ is converted to a binary matrix with $\ceil{\log q}$ rows and $n$ columns. 
Observe that a burst of up to two deletions in $\bu$ spans at most two consecutive columns in $A(\bu)$ and there is a burst of up to two deletions in each binary row. Similar to the case of correcting a burst of $t$ deletions (as discussed in Subsection IV-A), the overall coding strategy in \cite{2del:2} is split into two main parts. 

\begin{itemize}
\item The first row in the array belongs to a binary code that can correct a burst of at most two deletions, proposed by Levenshtein in 1967 in \cite{le:2del}, that has redundancy $\log n + 1$ bits for codewords of length $n$. In addition, such a code has an additional {\em pattern length limited (PLL)} property, that restricts the maximum length of any substring with period 2 (the repetition of two consecutive bits instead of identical bits/symbols as in the RLL property) to be at most $\ell=\ceil{\log n}+O(1)$. The authors also showed that the redundancy to enforce both constraints in the first row is at most $\log n+3$ (refer to Construction 4, Lemma 2, \cite{le:2del}). 

\item Each of the remaining $(\ceil{\log q}- 1)$ rows in the array belongs to a modified version of the binary shifted VT-code, which can correct a burst of at
most 2 deletions with the positional knowledge (within $P$ positions) after recovering the first row. 
To obtain the desired redundancy, Schoeny \et{} also set $P=\ell+1=\ceil{\log n}+O(1)$. The redundancy used in each of the remaining $(\ceil{\log q}- 1)$ rows is at most $\log \log n+O(1)$ bits. 
\end{itemize}

The total redundancy of the coding scheme in \cite{le:2del} is $\log n+\log q\log \log n+O(\log q)$ bits. In this work, we use the idea of the differential shifted VT codes to further reduce the redundancy to construct a code correcting a burst of at most two deletions. The major difference in our coding scheme is that we view each $q$-ary sequence of length $n$ as a matrix with only two rows and $n$ columns. The mapping is designed as follows. 
\vspace{1mm}

Given $q> 2$. Set $q'=\ceil{q/2}$. For each symbol $x \in \Sigma_q$, {\em the decomposition} of $x$ in $\Sigma_{q'}$ is $\tau(x) = (x_1,x_2)$ where $x_1 \in \{0,1\},x_2\in \Sigma_{q'}$ and $x=x_1q'+x_2$. For example, when $q=3$, we have $\tau(0) = (0,0), \tau(1) = (0,1), \tau(2) = (1,0),$ and when $q=6$, we have $\tau(0) = (0,0), \tau(1) = (0,1), \tau(2) = (0,2), \tau(3) = (1,0), \tau(4) = (1,1), \tau(5) = (1,2).$

For a  $q$-ary sequence $\bx$ of length $n$ where $\bx=(x_1,x_2,\ldots x_n)$, we view it as the following matrix:

\begin{equation*}
D(\bx)=
\left[
  \begin{array}{cccc}
    \tau(x_1) & \tau(x_2) & \cdots & \tau(x_n) \\
  \end{array}
\right]
=
\left[
  \begin{array}{cccc}
    x_{1,1} & x_{1,2} & \cdots  & x_{1,n}\\
    x_{2,1} & x_{2,2} & \cdots & x_{2,n} \\
  \end{array}
\right],
\end{equation*}
where the first row $D(\bx)_1=(x_{1,1},x_{1,2},\ldots, x_{1,n}) \in \{0,1\}^n$, the second row $D(\bx)_2=(x_{2,1},x_{2,2},\ldots, x_{2,n}) \in \Sigma_{q'}^n$, and finally, the $i$th column $\tau(x_i)=(x_{i,1}, x_{i,2})^T$ for $1\le i\le n$.

Our overall coding scheme is as follows. For the first row, we also use the binary codes proposed by Levenshtein in 1967 in \cite{le:2del} that can correct a burst of at most two deletions with the PLL constraint as proposed by Wang \et{} \cite{2del:2}. On the other hand, for the second row, which is a $q'$-ary sequence, where $q=\ceil{q/2}$, we then use the differential shifted VT codes to correct the error given the positional knowledge of the errors. Before presenting our main contribution, we summarize the result of Wang \et{} \cite{2del:2}, which is used in our construction for the first row. 

\begin{lemma}[Construction 3, Construction 5, Wang \et{} \cite{2del:2}]\label{wang:lemma}
There exists a linear-time encodable and decodable binary code correcting a burst of at most two deletions and the maximum length of any substring with period 2 is at most $\ceil{\log n}+5$, and the code's redundancy is at most $\log n+3$ bits. 
\end{lemma}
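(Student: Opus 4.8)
The plan is to exhibit the encoder whose image is the desired code, so I would split the two requirements — correction of a burst of at most two deletions, and the period-$2$ length bound — prove each in isolation, and then reconcile them. For the correction requirement I would start from Levenshtein's $1967$ construction \cite{le:2del}, which places $\bx\in\{0,1\}^n$ in a single coset cut out by a VT-type weighted-sum syndrome taken modulo a number of size $\Theta(n)$; this already yields a code correcting a burst of at most two deletions with redundancy $\log n+1$ bits. The first task is to recall precisely why such a syndrome suffices: a burst of at most two deletions changes the syndrome in a way that, combined with the length decrement, determines both the multiplicity (one or two deleted bits) and a candidate location, and the residual ambiguity in that location is confined to a maximal substring of period $2$ (a substring of the form $abab\cdots$) surrounding the error. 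This is the exact binary-alphabet analogue of the single-deletion situation, where the ambiguity is confined to a run, i.e.\ a period-$1$ substring, and is controlled by an RLL bound as in Lemma~\ref{RLLencoder}.

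This observation is what motivates pairing burst correction with the period-$2$ (\emph{pattern length limited}, PLL) constraint: once this code is used as the first row of an array, the decoder recovers it and must report the error location to within $P$ consecutive positions for the remaining rows, and bounding every period-$2$ substring by $\ell=\ceil{\log n}+5$ guarantees exactly that the ambiguity window has size at most $\ell+1=P$. The second, and principal, task is therefore to enforce the PLL property with only a constant number of extra redundant bits. I would do this by a \emph{sequence-replacement} encoder, the period-$2$ counterpart of the RLL encoder of Lemma~\ref{RLLencoder}: scan the current word left to right for the first period-$2$ substring longer than $\ell$, excise a fixed-length block of it, record the block's starting position as a $\ceil{\log n}$-bit pointer, and append that pointer at the tail behind a short marker chosen so that neither the marker nor its junction with the data can itself create a period-$2$ substring of length exceeding $\ell$. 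Each pass strictly decreases the total length of over-long period-$2$ substrings, so the procedure terminates after a linear number of steps, runs in linear time, and spends only $O(1)$ net redundancy; this is essentially Wang \et's Construction~5 \cite{2del:2}.

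The final, and hardest, step is to \emph{combine} the two mechanisms so that the total redundancy stays at $\log n+3$ while both properties hold simultaneously. The difficulty is that imposing the weighted-sum syndrome on a PLL-clean word can re-create a long period-$2$ substring, whereas re-running the PLL pass can perturb the syndrome. One must either reserve the syndrome-carrying positions inside a region the replacement procedure never touches, or argue — as in the single-deletion case of Lemma~\ref{newRLL}, where the $O(\log n)$ reserved symbols can only lengthen a forbidden pattern by a controlled amount — that the handful of reserved bits can extend no period-$2$ substring past $\ell$. I expect the bookkeeping of this interaction to be the main obstacle: showing that the marker and pointer overhead together with the syndrome cost sum to at most two bits beyond Levenshtein's $\log n+1$, and that the composed map remains invertible in linear time. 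The remaining verifications — that a burst of at most two deletions is corrected and that the PLL bound holds on the output — follow the same routine pattern already used for the differential VT and differential shifted VT codes.
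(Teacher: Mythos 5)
The first thing to say is that the paper does not prove Lemma~\ref{wang:lemma} at all: it is imported verbatim from Wang \et{} \cite{2del:2} (their Constructions~3 and~5), and the only ``proof'' present in the paper is the informal description in Section~V of how that construction works. Measured against that description, your outline is faithful: the first ingredient is indeed Levenshtein's burst-$\le 2$ code \cite{le:2del} with redundancy $\log n+1$, the second is a sequence-replacement encoder enforcing the period-$2$ (PLL) bound $\ell=\ceil{\log n}+5$ at constant cost, and the combined redundancy claimed is $\log n+3$. Your observation that the location ambiguity of a burst of one or two deletions is confined to a maximal period-$2$ substring (a run being the period-$1$, hence also period-$2$, special case) is exactly the reason the PLL constraint, rather than a mere RLL constraint, is what the array construction needs, and it matches how the paper later uses the lemma in Theorem~\ref{thm:le2-burst-qary}.

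That said, what you have written is a plan rather than a proof, and you have correctly located, but not discharged, the two points where the real work lies. First, the accounting: getting from $\log n+1$ plus ``$O(1)$ for the PLL pass'' down to the specific constant $+3$ requires exhibiting the marker and pointer format of the replacement step and verifying that excising a period-$2$ block of length $>\ell$ frees strictly more bits than the pointer consumes; nothing in your sketch pins this to two extra bits. Second, the interaction: you flag that fixing the syndrome can destroy the PLL property and that re-running the replacement perturbs the syndrome, and you propose two possible resolutions, but you do not commit to one or verify it; the analogue of Lemma~\ref{newRLL} (reserved positions can only lengthen a forbidden pattern by a bounded amount, so one targets a slightly smaller threshold before inserting them) is the route that actually closes this, and it is why the final bound is $\ceil{\log n}+5$ rather than a smaller constant. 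Since the lemma is a citation, these gaps are acceptable here, but a self-contained proof would have to fill both.
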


We now present our main construction of non-binary codes correcting a burst of at most two deletions with only $\log n+3\log \log n+O(\log q)$ redundant bits. For simplicity, suppose that $n$ is even.

\begin{definition}
For a sequence $\bx=(x_1,x_2,\ldots x_n) \in\Sigma_q^n$, given $i,k>0$, we define the $(i;s)$\_subsequence of $\bx$, denoted by $\bx_{(i;s)}$, and $\bx_{(i;s)}=\Big(x_i, x_{i+s}, x_{i+2s}, \ldots, x_{i+s\floor{(n-i)/s}}\Big)$. 
\end{definition}

We observe that when $i=s=1$, we have $\bx_{(1;1)}\equiv \bx=(x_1,x_2,\ldots x_n)$. 

\begin{construction}[{\bf $q$-ary codes correcting at most two deletions}]\label{le2del-SVT}
Given $n,q$. Let $\C_1$ be a code obtained from Lemma~\ref{wang:lemma}. Set $\ell=\ceil{\log n}+5, P=\ell+1$ and $q'=\ceil{q/2}$. For $\ba=(a_1,a_2,a_3)$ and $\bb=(b_1,b_2,b_3)$, where $0\le a_1,a_2,a_3 <q'(P+1)$, $0\leq b_1,b_2,b_3\le q'$, let $\C_{\ba,\bb}(n,\le2;q)$ be a set of all $q$-ary sequences of length $n$ such that for each codeword $\bc$ the following conditions hold: 
\begin{itemize}
\item For the first row $D(\bc)_1$, we must have $D(\bc)_1 \in \C_1$
\item For the second row $D(\bc)_2$, suppose that $\bx=D(\bc)_2=(x_1,x_2,\ldots,x_n)$, we must have:
\begin{align}
\bx&=(x_1,x_2,\ldots,x_n) \in {\rm Diff\_SVT}_{a_1,b_1}(n;q',P) \\
\bx_{(1;2)}&=(x_1,x_3,\ldots,x_{n-1}) \in {\rm Diff\_SVT}_{a_2,b_2}(n/2;q',P), \\
\bx_{(2;2)}&=(x_2,x_4,\ldots,x_n) \in {\rm Diff\_SVT}_{a_3,b_3}(n/2;q',P).
\end{align}
\end{itemize}

\end{construction}

\begin{theorem}\label{thm:le2-burst-qary}
The code $\C_{\ba,\bb}(n,\le2;q)$ from Construction~\ref{le2del-SVT} can correct a burst of at most two deletions, and there exist sufficient values of $a_1, a_2, a_3, b_1, b_2, b_3$ such that the redundancy of such a code $\C_{\ba,\bb}(n,\le2;q)$ is 
$\log n + 3\log \log n+ O(\log q) \text{ bits}.$
\end{theorem}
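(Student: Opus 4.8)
The plan is to establish two things: correctness (the code corrects a burst of at most two deletions) and the redundancy count. I will handle correctness first, since the redundancy follows by simply adding up the costs of the three differential-shifted-VT constraints plus the cost of $\C_1$. The overall strategy mirrors the decoding philosophy of Wang \et{} \cite{2del:2} and of Section IV: recover the binary first row $D(\bc)_1$ first, then use the positional information it provides to recover the $q'$-ary second row $D(\bc)_2$.

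First I would argue that a burst of at most two deletions in the $q$-ary codeword $\bc$ induces a burst of at most two deletions in each of the two rows of $D(\bc)$, affecting at most two consecutive columns. By Lemma~\ref{wang:lemma}, the first row $D(\bc)_1 \in \C_1$ can be recovered exactly, and more importantly, because $\C_1$ has the period-2 pattern-length-limited property with bound $\ell = \ceil{\log n}+5$, comparing the recovered first row against the received first row localizes the error to within $P = \ell+1$ consecutive positions. This is the key step that transfers positional knowledge from the first row to the second row: once we know the deletion burst lies within a window of $P$ consecutive columns, the second-row error is confined to the same window.

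The heart of the argument, and the step I expect to be the main obstacle, is recovering the $q'$-ary second row $\bx = D(\bc)_2$ given that the at-most-two-deletion burst is localized to within $P$ positions. Here the three constraints (1)--(3) must work together. The point is to distinguish the two sub-cases of a variable-length burst. If exactly one symbol is deleted from $\bx$, then exactly one of the two interleaved subsequences $\bx_{(1;2)}$ and $\bx_{(2;2)}$ loses a symbol while the other is untouched; since each subsequence is a differential shifted VT code with localization window $P$ (constraints (2) and (3)), and a deletion in $\bx$ within a window of $P$ consecutive positions induces a deletion in each subsequence within a window of at most $\ceil{P/2}+1 \le P$ positions, Lemma~\ref{thm:dsvt} applies to the affected subsequence and recovers it; combining with the unaffected subsequence reconstructs $\bx$. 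If exactly two consecutive symbols are deleted from $\bx$, then each of $\bx_{(1;2)}$ and $\bx_{(2;2)}$ loses exactly one symbol, again within a $P$-window, so both can be corrected by their respective $P$-bounded single-deletion property. The delicate part is verifying that the decoder can decide \emph{which} sub-case occurred (one deletion versus two) and, in the single-deletion case, which subsequence was hit — this is exactly where constraint (1), namely $\bx \in {\rm Diff\_SVT}_{a_1,b_1}(n;q',P)$, is needed as a consistency check, and where one must confirm that the induced error windows in the subsequences genuinely stay within $P$ so that Lemma~\ref{thm:dsvt} is legitimately invoked.

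For the redundancy, I would count symbol by symbol and then convert to bits. The first row uses at most $\log n + 3$ bits by Lemma~\ref{wang:lemma}. Each of the three differential shifted VT constraints on the $q'$-ary second row costs, by Remark~\ref{encoderSVT}'s accounting (the redundancy of a differential shifted VT code is $\log_{q'}(P+1) + 1 + \log_{q'}(q'+1)$ symbols), roughly $\log_{q'} P + O(1)$ symbols; since $P = \ceil{\log n}+6$ we have $\log_{q'} P = \log \log n / \log q' + O(1)$, so in bits each constraint contributes $\log \log n + O(\log q)$ bits. Summing the three second-row constraints gives $3\log\log n + O(\log q)$ bits, and adding the first-row cost $\log n + 3$ yields the claimed total
\begin{equation*}
\log n + 3\log\log n + O(\log q) \text{ bits}.
\end{equation*}
The existence of suitable $a_1,a_2,a_3,b_1,b_2,b_3$ achieving this is then guaranteed by a pigeonhole argument over the finitely many choices of syndrome parameters, exactly as in the lower-bound portions of Theorems~\ref{mainresult} and \ref{thm:t-burst-qary}.
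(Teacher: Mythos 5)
Your overall architecture matches the paper's: determine the number of deletions from the received length, recover the binary first row $D(\bc)_1$ via Lemma~\ref{wang:lemma}, use its period-2 pattern-limited property to localize the burst to within $P=\ell+1$ positions, and then correct the $q'$-ary second row with the differential shifted VT constraints. Your redundancy accounting is also essentially the paper's. However, there is a genuine gap in your treatment of the single-deletion case. You propose that when exactly one symbol of $\bx=D(\bc)_2$ is deleted, ``exactly one of the two interleaved subsequences loses a symbol while the other is untouched,'' and that the decoder can therefore apply Lemma~\ref{thm:dsvt} to the affected subsequence and splice in the unaffected one. This does not work operationally: a single deletion at position $i$ shifts every subsequent symbol by one place, so the parity of all positions after $i$ flips. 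De-interleaving the received length-$(n-1)$ word does \emph{not} yield $\bx_{(1;2)}$ with one deletion alongside an intact $\bx_{(2;2)}$; beyond the (unknown) deletion point, the odd-indexed entries of the received word are actually symbols of $\bx_{(2;2)}$ and vice versa. Since the decoder only knows the deletion location to within $P$ positions, it cannot perform the de-interleaving you need. This is precisely why the paper reserves constraint~(3) --- the whole-sequence membership $\bx\in{\rm Diff\_SVT}_{a_1,b_1}(n;q',P)$ --- as the workhorse for the single-deletion case, and uses the two subsequence constraints~(4) and~(5) only for the burst of exactly two deletions, where one deleted symbol sits at an odd index and the other at an even index, so the parity alignment of the de-interleaved received word is preserved.

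A secondary, smaller point: you cast constraint~(1) as a ``consistency check'' for deciding whether one or two deletions occurred, but that decision is immediate from the length of the received word ($n-1$ versus $n-2$), so no such check is needed; the whole-sequence constraint is needed for actual error correction in the one-deletion case, not for case disambiguation. With the single-deletion case rerouted through constraint~(3) as in the paper, the rest of your argument (the two-deletion case, the window-size bound $\ceil{P/2}+1\le P$ for the subsequences, and the redundancy count of $\log n+3$ bits for the first row plus $3\log\log n+O(\log q)$ bits for the three second-row constraints) is sound.
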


\begin{proof}
We first show that such a code from Construction~\ref{le2del-SVT} can correct a burst of at most two deletions by providing an error-decoding algorithm. Suppose that from the codeword $\bc\in \C_{\ba,\bb}(n,\le2;q)$, the received sequence is $\bc'$. Clearly, from the length of $\bc'$, we can conclude the number of errors that occurred. If the length of $\bc'$ is $n$ then there is no error.

If the length of $\bc'$ is $n-1$, we conclude that there is a single deletion. Consequently, both rows $D(\bc)_1$ and $D(\bc)_2$ suffer exactly one deletion. Since $D(\bc)_1 \in \C_1$, which is a binary code capable of correcting a burst of up to 2 deletions, we can recover $D(\bc)_1$ uniquely. Next, since the maximum length of any substring with period 2 in $D(\bc)_1$ is at most $\ceil{\log n}+5$, the maximum run of identical bits in $D(\bc)_1$ is also at most $\ell=\ceil{\log n}+5$. We then conclude the location of the other error in $D(\bc)_2$ to be within determined $P$ positions where $P=\ell+1$. We then use the constraint (3) from the construction that $\bx=D(\bc)_2=(x_1,x_2,\ldots,x_n) \in {\rm Diff\_SVT}_{a_1,b_1}(n;q',P)$ to correct the error in $D(\bc)_2$. 

If the length of $\bx'$ is $n-2$, we conclude that there is a burst of exactly two deletions. Consequently, both rows $D(\bc)_1$ and $D(\bc)_2$ suffer exactly two consecutive deletions. In addition, we conclude that each subsequence, $\bx_{(1;2)}$ or $\bx_{(2;2)}$, suffers exactly a single deletion. 
Since $D(\bc)_1 \in \C_1$, which is a binary code capable of correcting a burst of up to 2 deletions, we can recover $D(\bc)_1$ uniquely. Next, since the maximum length of any substring with period 2 in $D(\bc)_1$ is at most $\ceil{\log n}+5$, we then conclude the location of the other error in $D(\bc)_2$ to be within determined $P$ positions where $P=\ell+1$. We then use the constraint (4) from the construction that $\bx_{(1;2)}=(x_1,x_3,\ldots,x_{n-1}) \in {\rm Diff\_SVT}_{a_2,b_2}(n/2;q',P)$ to correct the error in  $\bx_{(1;2)}$. Similarly, we use the constraint (5) from the construction that $\bx_{(2;2)}=(x_2,x_4,\ldots,x_n) \in {\rm Diff\_SVT}_{a_3,b_3}(n/2;q',P)$ to correct $\bx_{(2;2)}$. 

Thus, the code $\C_{\ba,\bb}(n,\le2;q)$ from Construction~\ref{le2del-SVT} can correct a burst of at most two deletions. It remains to show the redundancy of our designed codes. According to Lemma~\ref{wang:lemma}, the redundancy used for the first row is at most $\log n+3$ bits. On the other hand, the redundancy for a differential shifted VT code is $\ceil{\log_{q'} q'(P+1)}+2$ symbols, or $\log P+O(\log q)$ bits (see Theorem~\ref{encoderSVT}). In our construction, $P=\ell+1=\ceil{\log n}+6$, and hence, the redundancy for the second row to enforce three constraints (3), (4), and (5) is at most $3\log \log n+O(\log q)$ bits. Consequently, there exist sufficient values of $a_1, a_2, a_3, b_1, b_2, b_3$ such that the redundancy of $\C_{\ba,\bb}(n,\le2;q)$ is $\log n+3\log \log n+O(\log q)$ bits. 
\end{proof}

\begin{remark}
The idea of Construction~\ref{le2del-SVT} can be extended to construct non-binary codes correcting a burst of up to $t$ deletions. We still view each $q$-ary sequence $\bc$ of length $n$ as a matrix with exactly two rows and $n$ columns $D(\bc)$. Similar to the work of Wang \et{} \cite{WANG:BURST}, the first row belongs to a binary code that is capable of correcting a burst of up to $t$ deletions (for example, refer to the work of Lenz \et{} \cite{len:burst} for an efficient design of such a code) with an additional constraint to restrict the location of errors. Particularly, in \cite{WANG:BURST}, the authors show that it is possible to locate the errors within $O(\log n)$ positions using the concept of {\em $(w,\delta)$-dense string}. We then design the constraints for the second rows as in Construction~\ref{le2del-SVT} to handle every single case of $s$ deletions for any $s\le t$. In general, we would need $k(k+1)/2$ such constraints, resulting in a redundancy of at most $k(k+1)\log \log n+o(\log q)=O(k^2 \log \log n)$ bits (see Example~\ref{burst3}). However, the encoding and decoding procedures are much more complicated than the case of a burst of at most two errors. We defer the study of codes correcting a burst of up to $t$ deletions and the design of efficient encoders for such codes to our future works. 
\end{remark}

\begin{example}\label{burst3} To correct a burst of at most three deletions, for each codeword $\bc$, the first row $D(\bc)_1$ belongs to a binary code, which is capable of correcting a burst of at most three deletions with an additional constraint to locate the errors within $O(\log n)$ positions. On the other hand, the constraints for the second row $\bx=D(\bx)_2=(x_1,x_2, \ldots, x_n)$ are as follows: 
\begin{align}
\bx&=(x_1,x_2,\ldots,x_n) \in {\rm Diff\_SVT}_{a_1,b_1}(n;q',P) \\
\bx_{(1;2)}&=(x_1,x_3,\ldots,x_{n-1}) \in {\rm Diff\_SVT}_{a_2,b_2}(n/2;q',P), \\
\bx_{(2;2)}&=(x_2,x_4,\ldots,x_n) \in {\rm Diff\_SVT}_{a_3,b_3}(n/2;q',P),\\
\bx_{(1;3)}&=(x_1,x_4,x_7,\ldots) \in {\rm Diff\_SVT}_{a_4,b_4}(n/3;q',P), \\
\bx_{(2;3)}&=(x_2,x_5,x_8,\ldots) \in {\rm Diff\_SVT}_{a_5,b_5}(n/3;q',P), \text{ and}\\
\bx_{(3;3)}&=(x_3,x_6,x_9\ldots) \in {\rm Diff\_SVT}_{a_6,b_6}(n/3;q',P).
\end{align}
We observe that when there is exactly one deletion, the constraint (6) is sufficient to correct the error in the second row $\bx=D(\bx)_2$. On the other hand, when there is a burst of two deletions, the decoder uses the constraints (7) and (8) to correct the errors in $\bx_{(1;2)}$ and $\bx_{(2;2)}$, accordingly. Similarly, when there is a burst of three deletions, the decoder uses the remaining constraints (9), (10), and (11) to correct the errors in $\bx_{(1;3)}$, $\bx_{(2;3)}$ and $\bx_{(3;3)}$, when each of them suffers from a single deletion. 
\end{example}

To conclude this section, we present an efficient encoder for non-binary codes correcting a burst of at most two deletions with $\log n+3\log \log n+O(\log q)$ redundant bits, which significantly improves on the redundancy $\log q \log n +O(\log q)$ bits of the encoder in \cite{WANG:BURST}. Recall that in Construction~\ref{le2del-SVT}, for each codeword $\bc$, the rows $D(\bc)_1$ and $D(\bc)_2$ can be encoded independently. While the construction for a binary code satisfying the constraints required in the first row was presented in \cite{WANG:BURST}, it remains to present an efficient encoding algorithm for the second row $D(\bc)_2$. 
\vspace{1mm}

Note that the redundancy used in the differential shifted VT encoder $\enc_{\rm Diff\_SVT}$ is $\ceil{\log_q q(P+1)}+2$ symbols (see Theorem~\ref{encoderSVT}), where $\ceil{\log_q q(P+1)}$ symbols are used to enforce the syndrome constraint and the other two symbols are used to enforce the parity check constraint. 

\begin{construction}\label{burst2enc}
Given $n,q,P$, where $q>2$. Set $k=n-3(\ceil{\log_q q(P+1)}+2)-7$, $m=\ceil{\log_q q(P+1)}$. We construct an encoder $\enc^{*}: \Sigma_q^k \to \Sigma_q^n$ as follows. Suppose that $k$ is even and for a sequence $\bx=(x_1,x_2,\ldots x_k) \in\Sigma_q^k$, we obtain:
\begin{small}
\begin{align*}
a_1&={\rm Syn}({\rm Diff}(\bx)) \ppmod{q(P+1)}, \text{ and } b_1=\sum_{i=1}^k {\rm Diff}(\bx)_i \ppmod{q+1}, \\
a_2&={\rm Syn}({\rm Diff}(\bx_{(1;2)})) \ppmod{q(P+1)}, \text{ and } b_2=\sum_{i=1}^{k/2} {\rm Diff}(\bx_{(1;2)})_i \ppmod{q+1}, \\ 
a_3&={\rm Syn}({\rm Diff}(\bx_{(2;2)})) \ppmod{q(P+1)}, \text{ and } b_3=\sum_{i=1}^{k/2} {\rm Diff}(\bx_{(2;2)})_i \ppmod{q+1}.
\end{align*} 
\end{small}
Let $\bu_1, \bu_2, \bu_3$ be the $q$-ary representation of length $m=\ceil{\log_q q(P+1)}$ of $a_1, a_2$, and $a_3$, respectively. On the other hand, let $\bv_1, \bv_2, \bv_3$ be the $q$-ary representation of length $2$ of $b_1, b_2$, and $b_3$, respectively. Recall the last symbol in $\bx$ is $x_k$ and suppose that the first symbol in $\bu_1$ is $\beta$. Let $\gamma$ be the smallest symbol in $\Sigma_q$ that is different from $x_k$ and $\beta$, and obtain a marker $M=(x_k,x_k,\gamma,\gamma,\gamma,\beta,\beta)$ of length 7. We then set $\enc^{*}(\bx) \equiv \bx M \bu_1 \bv_1 \bu_2 \bv_2 \bu_3 \bv_3 \in \Sigma_q^n$.
\end{construction}

\begin{theorem}
Let $\C=\Big\{\enc^{*}(\bx): \bx \in\Sigma_q^k\Big\}.$ We then have that $\C$ can correct a burst of at most two deletions given the knowledge of the location of the deleted symbols to be within P consecutive positions.
\end{theorem}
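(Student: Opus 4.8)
The plan is to exhibit a decoder for $\C$ that, on input $\bc'\in\Sigma_q^{n}\cup\Sigma_q^{n-1}\cup\Sigma_q^{n-2}$ obtained from $\bc=\enc^{*}(\bx)$ by a burst of at most two deletions confined to $P$ consecutive positions, returns the original message $\bx$. The structure mirrors the argument of Theorem~\ref{thm:le2-burst-qary}: once the appended parameters $a_1,b_1,a_2,b_2,a_3,b_3$ are available, a single deletion inside the data block is corrected through constraint~(3) via Lemma~\ref{thm:dsvt}, while a burst of exactly two deletions splits into one deletion in each of $\bx_{(1;2)}$ and $\bx_{(2;2)}$, corrected through the parameters $(a_2,b_2)$ and $(a_3,b_3)$ of constraints~(4) and~(5), again by Lemma~\ref{thm:dsvt}, after which $\bx$ is reassembled by interleaving. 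The genuinely new ingredient, absent from Theorem~\ref{thm:le2-burst-qary}, is that $\enc^{*}$ is a systematic-style encoder, so the data $\bx$ and the redundancy block $\bu_1\bv_1\bu_2\bv_2\bu_3\bv_3$ must first be disentangled from one another using the marker $M$.

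First I would read off the number of deletions from $|\bc'|\in\{n,n-1,n-2\}$. Next I would use the marker $M=(x_k,x_k,\gamma,\gamma,\gamma,\beta,\beta)$ to split $\bc'$ into a data segment and a redundancy segment. The point of the design is that $\gamma$ is chosen distinct from both the trailing data symbol $x_k$ and the leading redundancy symbol $\beta$, so the block $\gamma\gamma\gamma$ is a run of length three flanked by different symbols, while the doubled $x_kx_k$ on its left and the doubled $\beta\beta$ on its right shield the two boundaries of $M$. Consequently, after any burst of at most two deletions the decoder can still locate the surviving core of $M$ and hence decide which of two situations occurred: either the burst fell inside the data block, so that the redundancy block $\bu_1\bv_1\bu_2\bv_2\bu_3\bv_3$ is received intact, or the burst fell inside $M$ or inside the redundancy block, so that the first $k$ symbols, namely $\bx$ itself, are received intact.

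With the two segments separated, the decoding is routine. If the data segment has full length, output it directly. Otherwise the redundancy block is intact, so I recover $a_1,a_2,a_3$ from $\bu_1,\bu_2,\bu_3$ and $b_1,b_2,b_3$ from $\bv_1,\bv_2,\bv_3$; these are exactly the differential shifted VT parameters of the original $\bx$ and of its two interleaved subsequences $\bx_{(1;2)},\bx_{(2;2)}$. Feeding them, together with the given positional knowledge that the deletions lie in $P$ consecutive coordinates, into Lemma~\ref{thm:dsvt} recovers $\bx$ as described above, and $\dec\circ\enc^{*}(\bx)=\bx$ holds on the error-free input as well.

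The hard part will be the marker analysis, i.e.\ proving rigorously that the data/redundancy split is always unambiguous. This requires enumerating how a burst of length one or two can interact with $M$: entirely within the data block, straddling the data/$M$ boundary, entirely within $M$ (including the cases that shorten $\gamma\gamma\gamma$ to length two or one, or that merge $x_kx_k$ or $\beta\beta$ with their neighbours), straddling the $M$/redundancy boundary, and entirely within the redundancy block. In each case one must check that the decoder correctly identifies the core of $M$, and hence both the right endpoint of the data and the left endpoint of the redundancy, so that no surviving marker symbol is mistaken for data or redundancy and vice versa; the doubling of the boundary symbols and the choice $\gamma\neq x_k,\beta$ are precisely what rule out the ambiguous collisions. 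The remaining steps, syndrome extraction and the two applications of Lemma~\ref{thm:dsvt}, then follow verbatim from the proof of Theorem~\ref{thm:le2-burst-qary}.
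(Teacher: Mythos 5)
Your proposal follows essentially the same route as the paper's proof: use the marker $M=(x_k,x_k,\gamma,\gamma,\gamma,\beta,\beta)$ to decide whether the burst hit the data prefix or the marker/redundancy suffix, output the $k$-symbol prefix directly in the latter case (recomputing the suffix from it), and otherwise read off $a_1,b_1,a_2,b_2,a_3,b_3$ from the intact suffix and invoke the $P$-bounded decoding of Lemma~\ref{thm:dsvt} on $\bx$ (one deletion) or on $\bx_{(1;2)}$ and $\bx_{(2;2)}$ (two deletions). The marker case analysis you defer as ``the hard part'' is treated at the same level of detail in the paper, which only observes that $c'_{k}=x_k$, $c'_{k+3}=\gamma$, $c'_{k+6}=\beta$ after any burst of at most two deletions, so your argument is correct and no less complete than the original.
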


\begin{proof}
Suppose that $\bc=\enc^{*}(\bx)$ for some $\bx\in\Sigma_q^k$ and the decoder receives a sequence $\bc'$, which is obtained from $\bc$ via a burst of at most two deletions. Recall the construction of the marker $M=(x_k,x_k,\gamma,\gamma,\gamma,\beta,\beta)$ of length 7, hence, when there is a burst of at most two deletions, we must have $c'_{k}=x_k$, $c'_{k+3}=\gamma$ and $c'_{k+6}=\beta$. Therefore, given the received sequence $\bc'$, the decoder is able to get the information of $x_k$, the last symbol in $\bx$, the symbol $\gamma$, and finally $\beta$, which is the first symbol in $\bu_1$. Base on the information of the marker $M$, it is able to locate the burst of at most two deletions, whether in $\bx$, or in the marker $M$, or in the suffix $\bu_1 \bv_1 \bu_2 \bv_2 \bu_3 \bv_3$. 
\begin{itemize}
\item If the errors occur at the marker $M$ or in the suffix $\bu_1 \bv_1 \bu_2 \bv_2 \bu_3 \bv_3$, the decoder concludes that there is no error in $\bx$ and simply takes the prefix of $k$ symbols as the original sequence $\bx$. To recover the suffix $\bu_1 \bv_1 \bu_2 \bv_2 \bu_3 \bv_3$, it proceeds to recompute $a_1, a_2, a_3, b_1, b_2, b_3$ as in Construction~\ref{burst2enc}, and recover the suffix $\bu_1 \bv_1 \bu_2 \bv_2 \bu_3 \bv_3$. 
\item On the other hand, if the errors occur within the first $k$ symbols in $\bx$, the decoder concludes that there is no error in the suffix $\bu_1 \bv_1 \bu_2 \bv_2 \bu_3 \bv_3$. Based on the information of this suffix and given the knowledge of the location of the deleted symbols to be within $P$ consecutive positions, the decoder follows the error-decoding procedure in Lemma 5 (refer to the $q$-ary differential shifted VT codes, Construction 5) to correct the errors in $\bx$. 
\end{itemize}
In conclusion, the code $\C=\Big\{\enc^{*}(\bx): \bx \in\Sigma_q^k\Big\}$ can correct a burst of at most two deletions given the knowledge of the location of the deleted symbols to be within $P$ consecutive positions.
\end{proof}
The following result is then immediate. 

\begin{corollary}
There exists a linear-time encoder $\enc$ and a corresponding decoder $\dec$ for non-binary codes correcting a burst of at most two deletions (or two insertions) with redundancy $\log n+3\log \log n+O(\log q)$ bits. 
\end{corollary}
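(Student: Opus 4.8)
The plan is to assemble the encoder row-by-row using the two-row matrix view $D(\cdot)$ of Construction~\ref{le2del-SVT}, combining the binary first-row encoder of Lemma~\ref{wang:lemma} with the second-row encoder $\enc^{*}$ of Construction~\ref{burst2enc}, and then to invert the decomposition map $\tau$ column-by-column to obtain a single $q$-ary codeword. Concretely, I would split an arbitrary message into two blocks, feed the first block to the linear-time encoder of Lemma~\ref{wang:lemma} to produce a binary row $\bc_1\in\C_1$ of length $n$ whose maximum period-$2$ substring length is at most $\ell=\ceil{\log n}+5$, and feed the second block to $\enc^{*}$, instantiated with alphabet size $q'=\ceil{q/2}$ and window $P=\ell+1$, to produce a $q'$-ary row $\bc_2$ of length $n$ satisfying the three differential shifted VT constraints (3)--(5). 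The output codeword is then $\bc$ with $c_i=\tau^{-1}(c_{1,i},c_{2,i})=c_{1,i}q'+c_{2,i}$ for each $i$, so that $D(\bc)_1=\bc_1$ and $D(\bc)_2=\bc_2$; by construction $\bc\in\C_{\ba,\bb}(n,\le2;q)$, which corrects a burst of at most two deletions by Theorem~\ref{thm:le2-burst-qary}.

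For the decoder, given a received word I would first recover both rows by applying $\tau$ column-wise, infer the number of deletions from the received length, and decode the binary first row using the burst-$\le2$ decoder for $\C_1$. The recovered first row furnishes, via its bounded period-$2$ substrings, exactly the positional knowledge needed to localize the error in the second row to within $P$ consecutive positions; I would then invoke the $P$-bounded differential shifted VT decoder (the decoding procedure established in the theorem following Construction~\ref{burst2enc}, whose correctness rests on Lemma~\ref{thm:dsvt}) to correct $\bc_2$. Reading off the two message blocks from the decoded rows recovers the original message. Since both row encoders and decoders run in linear time and the column-wise maps $\tau,\tau^{-1}$ are $O(n)$, the overall encoder/decoder pair is linear-time.

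The redundancy bound follows by adding the two independent contributions: Lemma~\ref{wang:lemma} costs at most $\log n+3$ bits for the first row, while $\enc^{*}$ uses $3\big(\ceil{\log_{q'} q'(P+1)}+2\big)+7$ symbols over $\Sigma_{q'}$ for the second row. Substituting $P=\ceil{\log n}+6$ gives $\log_{q'} q'(P+1)=\log_{q'}\log n+O(1)$, and converting to bits at $\log q'$ bits per symbol yields $3\log\log n+O(\log q)$ bits, for a total of $\log n+3\log\log n+O(\log q)$ bits. Finally, the insertion case needs no separate argument: by Levenshtein's equivalence a code corrects a burst of at most two deletions if and only if it corrects a burst of at most two insertions, as recalled in Section III.

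The main obstacle is not any single calculation---all the heavy lifting (correctness of the differential shifted VT codes in Lemma~\ref{thm:dsvt} and of $\enc^{*}$) is already in place---but rather the bookkeeping that glues the two independently-encoded rows together. One must verify that the parameter $P$ fed into $\enc^{*}$ is precisely the localization window $\ell+1$ guaranteed by the period-$2$ property of the first row, and that the column-wise recombination through $\tau^{-1}$ faithfully realizes both the first-row membership in $\C_1$ and the three second-row constraints simultaneously, so that the assembled word genuinely lies in $\C_{\ba,\bb}(n,\le2;q)$. Once this consistency of the $P$ parameters and of the message-length split is confirmed, the corollary is immediate.
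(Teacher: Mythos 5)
Your proposal matches the paper's intended argument: encode the first row with the binary burst-$\le 2$ code of Lemma~\ref{wang:lemma}, encode the second row with $\enc^{*}$ over $\Sigma_{q'}$ using window $P=\ell+1$, recombine via $\tau^{-1}$, and decode by first recovering row one, localizing the error window, and then invoking the $P$-bounded decoding of the second row; the redundancy accounting is also the one the paper uses. The only imprecision is your claim that the assembled word lies in a single code $\C_{\ba,\bb}(n,\le 2;q)$ --- the encoder $\enc^{*}$ appends message-dependent syndromes $\bu_i\bv_i$ behind a protected marker rather than forcing membership in one fixed coset, so correctness rests on the theorem following Construction~\ref{burst2enc} (which you do also cite) rather than on Theorem~\ref{thm:le2-burst-qary} itself; this does not affect the validity of the argument.
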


\section{Conclusion}

We have presented a new version of non-binary VT codes that are capable of correcting a single deletion or single insertion, providing an alternative simpler and more efficient encoder of the construction by Tenengolts \cite{Tene:1984}. Our construction is based on the {\em differential vector}, and the codes are referred to as the {\em differential VT codes}. In addition, we have provided linear-time algorithms that encode user messages into these codes of length $n$ over the $q$-ary alphabet for $q \ge 2$ with at most $\ceil{\log_q n}+1$ redundant symbols, while the optimal redundancy required is at least $\log_q n+\log_q (q-1)$ symbols. Our designed encoder reduces the redundancy of the best-known encoder of Tenengolts \cite{Tene:1984} by at least  $2$ redundant symbols or equivalently $2\log_2 q$ bits. 

Moreover, we have introduced the {\em $q$-ary differential shifted VT codes} to construct non-binary codes correcting a burst of deletions (or insertions). Particularly, when there are at most two errors, our designed codes incur $\log n+3\log \log n+ O(\log q)$ redundant bits, which improves a recent result of Wang \et{} \cite{2del:2} with redundancy $\log n+O(\log q \log \log n)$ bits for all $q\ge 8$. We have also presented an efficient encoder for codes correcting a burst of exactly $t$ deletions (or insertions) for arbitrary $t\ge 1$, while the design of the encoder for codes correcting a burst of variable length (when the length is up to $t$ for arbitrary $t>3$) is deferred to our future work.




\end{document}